\documentclass[10pt,final,journal, letterpaper,  twoside, twocolumn]{IEEEtran}
\usepackage{cite,graphics,amsmath,rotating,amssymb}

\newtheorem{theorem}{\text{Theorem}}
\newtheorem{corollary}{\text{Corollary}}

\newtheorem{proposition}{\text{Proposition}}
\newtheorem{remark}{\text{Remark}}
\hyphenation{op-tical net-works semi-conduc-tor}
\allowdisplaybreaks

\title{Guiding Blind Transmitters: Degrees of Freedom Optimal
  Interference Alignment Using Relays}

\author{

  Ye~Tian,~\IEEEmembership{Student Member,~IEEE,} and
  Aylin~Yener,~\IEEEmembership{Member,~IEEE}

  \thanks{ Manuscript received May 31, 2012; revised Feb 6, 2013; accepted Apr 1, 2013. This work was supported in part by NSF
    grants 0721445, 0964364 and 0964362. This work was presented in
    part at the IEEE Wireless Communications and Networking
    Conference, WCNC'12, IEEE International Workshop on Signal
    Processing Advances in Wireless Communications, SPAWC'12 and IEEE
    International Symposium on Information Theory, ISIT'12.}

  \thanks{
      Ye Tian was with the Department of Electrical Engineering at the Pennsylvania State University, University Park, PA 16802; He is now with Broadcom Corporation, Sunnyvale, CA, 94085.
      Aylin Yener is with the Department of Electrical Engineering at
      the Pennsylvania State University, University Park, PA 16802
      (email: yetian@broadcom.com, yener@ee.psu.edu).

    }
    \IEEEpubid{}
}

\begin{document}
\thispagestyle{headings}
\maketitle

\begin{abstract}
Channel state information (CSI) at the transmitters (CSIT) is of importance
for interference alignment schemes to achieve the optimal degrees of
freedom (DoF) for wireless networks. This paper investigates the
impact of half-duplex relays on the degrees of freedom (DoF) of the X
channel and the interference channel when
the transmitters are blind in the sense that no CSIT is available. In
particular, it is shown
that adding relay nodes with global CSI to the communication model is
sufficient to recover the DoF that is the
optimal for these models with global CSI at the transmitters. The
relay nodes in essence help steer the directions of the transmitted signals to
facilitate interference alignment to achieve the optimal DoF with CSIT. The general $M\times N$ X
channel with relays and the $K$-user interference channel are both
investigated, and sufficient conditions on the number of antennas at
the relays and the number of relays needed to
achieve the optimal DoF with CSIT are established. Using relays,
the optimal DoF can be achieved in finite channel uses. The DoF for
the case when relays only have delayed CSI is
also investigated, and it is shown that with delayed CSI at the relay
the optimal DoF with full CSIT cannot be achieved. Special cases of the X channel and
interference channel are investigated to obtain further design
insights. 
\end{abstract}

\begin{IEEEkeywords}
Degrees of freedom, interference alignment, relay, X channel, $K$-user interference channel, channel state information.
\end{IEEEkeywords}

\section{Introduction}
Interference is inherent to any fully connected multi-user wireless network. As the number of
devices sharing the spectrum with high rate demands grows, wireless
networks become more and more interference limited. The significance
of interference on the operation of a wireless network renders it
natural to focus on its high SNR performance to obtain
design insights and characterize the interaction between the signals.
Thus, degrees of freedom (DoF), which characterizes the scaling of the
transmission rates of
wireless networks in high signal to noise ratio (SNR) regime, is an
important metric to measure the performance of an interference-limited
system.

Interference alignment was shown to achieve
the optimal DoF for a variety of interference-limited wireless
networks \cite{DoFMIMOX,CadambeXNetwork,CadambeKuser,HostMadsen}. In reference \cite{DoFMIMOX}, the authors have shown that
the optimal DoF
$\frac{4M}{3}$ can be achieved for the 2-user multiple input multiple output
(MIMO) X channel with $M$ antennas at
each node, using symbol extensions and interference alignment,
demonstrating the achievability of non-integer DoF $4M \over 3$ with
constant channel for $M>1$. For $M=1$ with constant channels, the DoF
$\frac{4}{3}$ is shown to be achievable in \cite{HostMadsen}. Reference
\cite{CadambeXNetwork} further
generalized the result to the $M\times N$ user X channel, and showed that
the optimal DoF is $\frac{MN}{M+N-1}$ with
single antenna nodes and a time-varying channel. Reference
\cite{CadambeKuser} showed that interference
alignment achieves the optimal DoF of the $K$-user interference
channel, $\frac{K}{2}$, with single antenna nodes and time-varying channel. Follow up studies on the
DoF of the interference channels, for example, the SIMO interference channel,
the $K$-user $M\times N$ MIMO interference channel, and interference
channel with cooperation and cognition, can be found
in references \cite{TiangaoKMN,SIMOIC,DoFMIMOCC}.

To effectively implement interference alignment, it is crucial
to have global instant CSIT which can be difficult to obtain
for practical systems. Reference
\cite{DoFwithoutCSIT} has studied the DoF region of the $2$-user MIMO
broadcast channel and the $2$-user MIMO interference channel without CSIT,
and loss of DoF is observed for many scenarios of interest. Reference
\cite{VazeNoCSIT} has further generalized the results to $K$-user
broadcast and interference channels,
and also derived outerbounds on DoF region for the $K$-user X channel.
This reference has established that without CSIT, the transmitters cannot
steer the signals to the exact desired directions to guarantee that
the interference is aligned together at the receivers, which causes
the performance degradation in terms of DoF.

While loss of DoF is observed when no CSIT is available at the
transmitters, reference \cite{JafarCorr} has observed that as long as
the channel's correlation structure is known at the transmitters, without any knowledge of the
exact channel coefficient, interference alignment is
still possible for certain wireless networks. Reference
\cite{ChenweiDark} has further developed this idea and proposed the blind
interference alignment strategies using staggered antennas, which
can artificially create the desired channel correlation pattern by
switching the antennas used by the receivers. For systems where CSIT is
completely unknown however, loss of DoF appears to be inevitable.

A more practical assumption about CSIT is that the transmitter may have
delayed CSI. The delayed
CSIT model characterizes the channel variation and the delay
in the feedback of CSI from receivers, and thus is important from both theoretical and practical perspective.
The delayed CSIT assumption is first studied in the context of the
$K$-user broadcast channel \cite{StaleCSIT}, i.e., a channel with a transmitter having $K$ antennas and
$K$ receivers each with a single antenna, where the transmitter has
accurate and global CSIT delayed by several time slots. It is shown that
the delayed CSIT can be useful for interference alignment and
the DoF can be improved significantly compared to the case without CSIT.
This delayed CSIT assumption is then applied to various channel
models such as the general broadcast channels, interference
channels and X channels, and improvement
on the DoF compared to the cases without CSIT can be found in references
\cite{MalekiRetro,VazeMIMOICdelayed,VazeBCdelayed,GhasemiMIMOICdelayed,GhasemiXdelayed,TandonDof,MohammadBC}.

The delayed CSIT is an interesting assumption which, in fact, shows
that feedback of delayed CSI can provide capacity gain for multi-destination wireless
networks, which is in contrast with various single-destination models \cite{Shannonfeedback,VisFeedback,Basherfeedback}.
However, there is a performance degradation with the delayed CSIT
assumption compared to when global CSIT is available. For example,
the DoF for the $K$-user broadcast channel with delayed CSIT is shown
in \cite{StaleCSIT} to be
$\frac{K}{1+\frac{1}{2}+\cdots+\frac{1}{K}}$, whereas with global CSIT,
the optimal DoF is $K$.

The operation of relaying, although is beneficial in
improving the achievable rates for many multi-user wireless networks
\cite{cover_relay,Offset,tandonMARC,Myjournal1,YEICOBR,Sahin_ICOBR,ivana3},
is shown in reference \cite{CadambeALLDOF} to be unable to provide DoF
gain for the fully connected interference channel and X channel with full
CSI at all nodes. In this context, relaying is shown to be useful only
to facilitate
interference alignment for some specific scenarios. For quasi-static
channels, references
\cite{NouraniX,NouraniIC,IAXrelay} have proposed strategies to utilize the
relay to randomize the channel coefficients at the receivers, and the optimal DoF can be achieved although the channel is not time
varying within the transmission blocks. Reference \cite{Haishi} has proposed relay-aided
interference alignment schemes that can achieve the optimal DoF of the $K$-user
interference channel with finite time extensions. For networks that have limited CSI, it is shown
in reference \cite{Tannious} that using a relay, the
optimal DoF for the $K$-user interference channel can be achieved when all the nodes
have local CSI only, provided that the relay has more
antennas than the total number of single-antenna
transmitters. For networks that
are not fully connected, for example, the multi-hop relay networks, references
\cite{RankovRelay,JeonRelayNetwork,IFneu,TandonDof} have studied the
DoF under either global CSIT or delayed CSIT assumptions.

Whereas the study of relaying on the DoF of fully connected wireless networks
so far focused on using relays to facilitate interference alignment,
in this work, we aim to theoretically study the
impact of relaying on the DoF from another perspective. We focus on
understanding whether, and to what extent, relays can improve the DoF of wireless
networks when the source nodes, i.e., the transmitters, are {\it blind} in the sense that no
CSIT is available. In this paper, we mainly consider the case when
relays have global CSI as a first step to investigate the impact of
relays on the DoF of wireless networks without CSIT. The
justification of the setting is that it is likely that the relay nodes
are located in between the sources and the destinations and could have
access to more accurate CSI. The relays can be small base
stations at fixed locations with more power resources and computing
capability, and obtaining CSI can be less challenging. Specifically,
we study the DoF of the X channel and the
interference channel with single-antenna users and half-duplex
multi-antenna relays, where no CSI is
available at the transmitters, but global CSI can be obtained at the relays and the
receivers. We first design a {\it joint beamforming} based transmission scheme for the general $M\times N$
X channel with relays. We show that when each relay is equipped with $L$ antennas, with
$\left\lceil\frac{(M-1)(N-1)}{L^2}\right\rceil$ relays, the DoF $\frac{MN}{M+N-1}$
can be achieved, which is the same optimal DoF as the case when CSIT is available. We
then consider two special cases: the $K$-user X
channel with a multiple antenna relay and the $K$-user X channel with
single antenna relays. For the case when
relay nodes are only equipped with single antenna, the optimal DoF can be achieved with $(K-1)^2$
relays. For the case with one multiple
antenna relay, we can design a different
scheme with less computational complexity that uses {\it partial interference alignment} at the relay and
{joint beamforming} to show the achievability of optimal DoF
$\frac{K^2}{2K-1}$ using one relay with $K-1$ antennas. Note that in the above
results, the channel is required to be time varying in order to
achieve the optimal DoF. An
interesting feature of the DoF optimal interference alignment scheme
using relays is that only finite channel usage is required to achieve
the exact optimal DoF, whereas for the general $M\times N$ X channel without relays but with CSIT,
infinite channel uses are required. The case when there is no CSIT but
relays only have delayed CSI is also investigated.

Using the techniques developed for the X channel, we further show that
interference alignment is possible for the $K$-user interference
channel without CSIT with the help of half-duplex relays. For the
general case, we design a two-slot transmission scheme using joint
beamforming, and show that it requires $\left\lceil\frac{K(K-2)}{L^2}\right\rceil$
relays with $L$ antennas to achieve the DoF $K \over 2$, which is
exactly the same optimal DoF as the case with CSIT. We then consider
two special cases: the case with one relay
with $K-1$ antennas, and the case with $K(K-2)$ single antenna
relays. Note that the special case when the relay has $K-1$ antennas
is also investigated in reference \cite{Chen_Globecom}. When we
have one relay with $K-1$ antennas, joint beamforming is not necessary
for interference alignment and the channel does not need to be time varying. However, when we have $K(K-2)$
relays each with a single antenna, joint beamforming is required to achieve
interference alignment and the channel does need to be time varying. 

Throughout the paper, we use bold letters, e.g. ${\bf h}$, to denote
constant vectors, bold capital letters, e.g. ${\bf H}$, to
denote matrices or vector of random variables, and ordinary capital letters, e.g. $H$, to denote
random variables. We use $\lceil x \rceil$ to denote the closest integer
that is smaller than $x$, and $\lfloor x \rfloor$ to denote the
closest integer that is larger than $x$. $[a_i]^{(i)}$ denotes the column vector obtained by
enumerating $a_i$ with index $i$, i.e.,
\begin{equation}
[a_i]^{(i)}=[a_i]_{i=1}^{i=n}=\left[a_1,a_2,\cdots,a_n\right]^T,\label{eq:8}
\end{equation} if $i=1,2,\cdots,n$.

Similarly, $[a_{ij}]^{(ij)}$ denotes the column vector
\begin{equation}
\left[\left[\left[a_{1j}\right]^{(j)}\right]^T,\left[\left[a_{2j}\right]^{(j)}\right]^T,\cdots\right]^T,\label{eq:11}
\end{equation} which is obtained by enumerating $a_{ij}$ for all
indices $i$ and $j$ as its entries.

We also use $[a_{ijk}]^{(ijk)}$ to denote the column vector
\begin{equation}
\left[\left[\left[a_{1jk}\right]^{(jk)}\right]^T,\left[\left[a_{2jk}\right]^{(jk)}\right]^T,\cdots\right]^T\label{eq:10},
\end{equation} which is obtained by enumerating $a_{ijk}$ for all
indices $i$, $j$, $k$ as its entries.

The remainder of the paper is organized as follows: Section
\ref{sec:system-model} introduces the system model. Section
\ref{sec:relay-aided-interf-XX} studies the relay-aided interference
alignment schemes for the X channel. Section
\ref{sec:relay-aided-interf-IC} studies the relay-aided interference
alignment schemes for the interference channel. Section
\ref{sec:conclusion} concludes the paper.

\section{System Model}\label{sec:system-model}
\subsection{$M\times N$ X channel with Relays}\label{sec:mtimes-n-x}
\begin{figure}[t]
\centering
\includegraphics[width=3in]{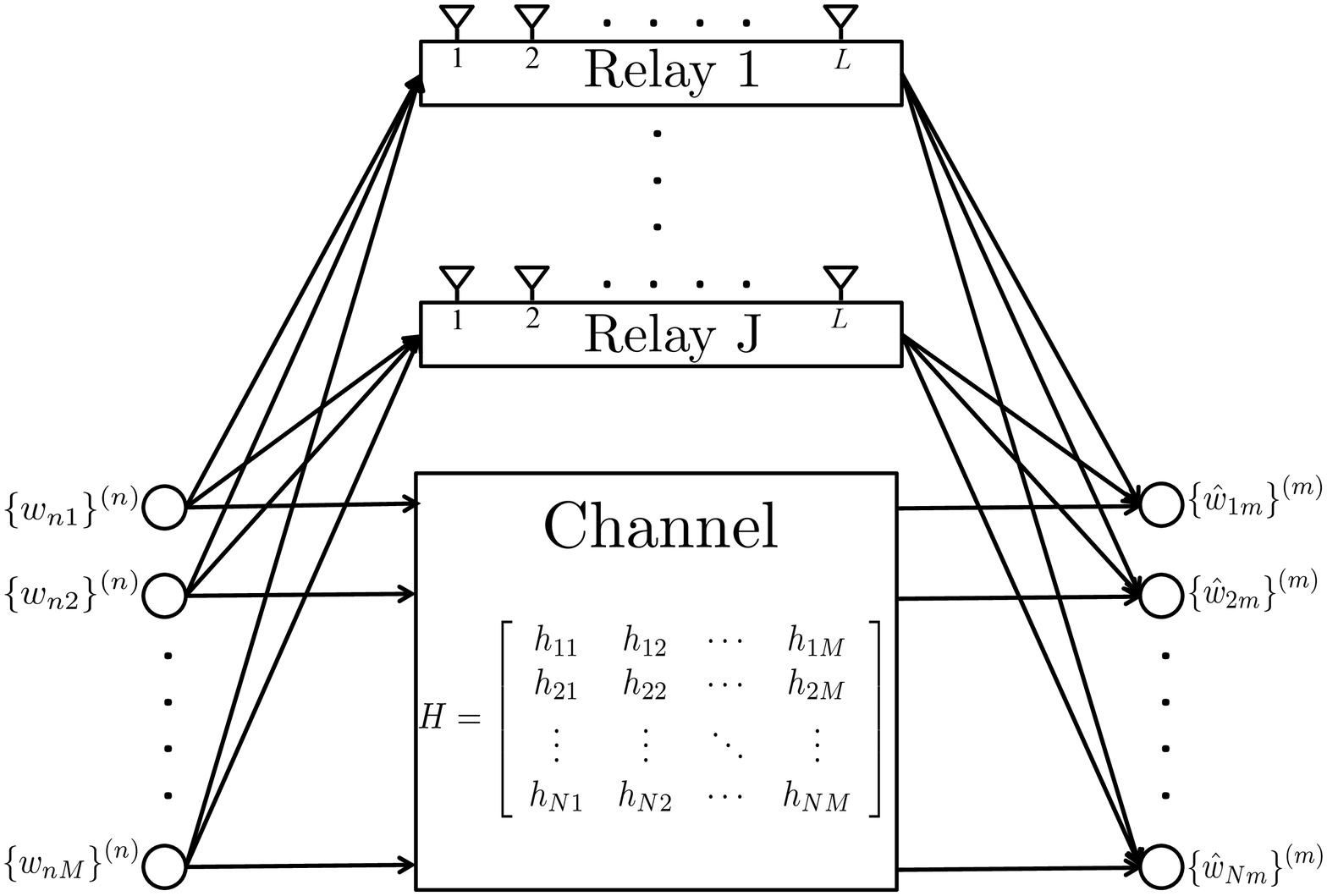}
\caption{$M\times N$ X channel with relays with $m=1,\cdots,M$, $n=1,\cdots,N$.} \label{fig-Xrelay}
\end{figure}
Fig. \ref{fig-Xrelay} shows the $M\times N$ X channel with relays.
In this model, there are $M$ transmitters and $N$ receivers, and each transmitter
has a message to be communicated to each receiver. It is assumed that
the transmitters and receivers are equipped with single
antenna. There are $J$ half-duplex relays available to help the
transmission. Each relay is assumed to have $L$ antennas. We denote $w_{nm}$ as the message
intended from transmitter $m$ to receiver $n$. The transmitted signal
from transmitter $m$ is denoted as $X_m(t)\in \mathbb{C}$ and the
transmitted signal from relay
$R_j$ is denoted as ${\bf X}_{R_j}(t) \in \mathbb{C}^L$, where $t$ is the time
index denoting the slot in which the signal is transmitted.

When the relays listen to the channel, the received signals at the
receivers are
\begin{equation}
Y_n(t)=\sum_{m=1}^Mh_{nm}(t)X_m(t)+{Z}_{n}(t),\label{eq:12}
\end{equation} where $~n=1,\cdots
N,~m=1,\cdots,M$, and the received signals at the relays are
\begin{equation}
{\bf Y}_{R_j}(t)=\sum_{m=1}^M{\bf h}_{R_jm}(t)X_m(t)+{\bf
  Z}_{R_j}(t),~j=1,\cdots,J.\label{eq:13}
\end{equation}

When the relays transmit, the received signals at the receivers are
\begin{equation}
Y_n(t)=\sum_{m=1}^Mh_{nm}(t)X_m(t)+\sum_{j=1}^J{\bf h}_{nR_j}(t)^T{\bf X}_{R_j}(t)+{Z}_{n}(t).\label{eq:14}
\end{equation}

In the above expressions, the transmitted signals are subject to
average power constraints $E(||{\bf X}_{R_j}(t)||^2)\le P$,
$E(|{X}_{m}(t)|^2)\le P$, $j=1,\cdots,J$, $m=1,\cdots,M$. $h_{nm}\in \mathbb{C}$ is the channel
coefficient from transmitter $m$ to the receiver $n$. ${\bf h}_{R_jm}(t) \in
\mathbb{C}^{L}$ is the channel vector between transmitter $m$ and
relay $R_j$, and ${\bf h}_{nR_j}(t) \in
\mathbb{C}^{L}$ is the channel vector between relay $R_j$ and receiver
$n$. It is assumed that the channel coefficients are independently drawn from a
continuous distribution for each time index, and the channel is time
varying. $Z_n(t)$ and ${\rm
  \bf Z}_{R_j}(t)$
are zero-mean Gaussian random variables with unit
variance and identity covariance matrix, respectively.

We denote the rate of message $w_{ij}$ with $R_{ij}(P)$ under power
constraint $P$. Define $\mathcal{C}(P)$ as the set of all achievable rate tuples $[R_{nm}(P)]^{(nm)}$ under
 power constraint $P$. The DoF is defined as
\begin{equation}
DoF = \lim_{P\to \infty} \frac{R_{\sum}(P)}{\log (P)},\label{eq:15}
\end{equation}
where $R_{\sum}(P) = \max_{\mathcal{C}(P)}
\left(\sum_{m,n}R_{nm}(P)\right)$. Note that since we consider
the DoF as our metric, in the rest of the paper, we omit the
noise terms in equations \eqref{eq:12}-\eqref{eq:14}.

\subsection{$K$-user Interference Channel with Relays}
\begin{figure}[t]
\centering
\includegraphics[width=2.7in]{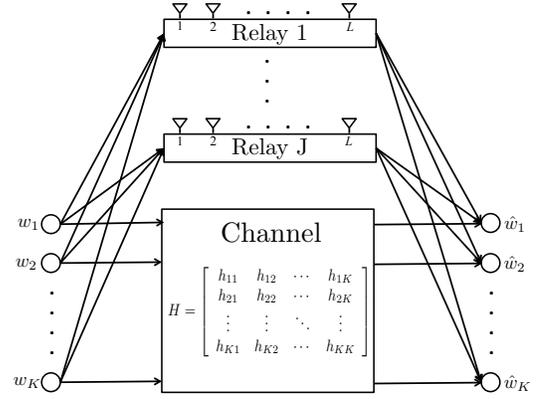}
\caption{$K$-user interference channel with relays.} \label{fig-ICrelay}
\end{figure}
Fig. \ref{fig-ICrelay} shows the $K$-user interference channel with relays.
In this model, there are $K$ transmitters and $K$ receivers, and each transmitter
has a message to be communicated to one intended receiver. It is assumed that
the transmitters and receivers are equipped with single
antenna. There are $J$ half-duplex relays available to help the
transmission. Each relay is assumed to have $L$ antennas. We denote $w_{k}$ as the message
intended from transmitter $k$ to receiver $k$, $k=1,\cdots,K$. The transmitted signal
from transmitter $k$ is denoted as $X_k(t)\in \mathbb{C}$ and the signal from relay
$j$ is denoted as ${\bf X}_{R_j}(t) \in \mathbb{C}^L$, where $t$ is the time
index denoting the slot in which the signal is transmitted.

When the relays listen to the channel, the received signals at the
receivers are
\begin{equation}
Y_n(t)=\sum_{k=1}^Kh_{nk}(t)X_k(t)+{Z}_{n}(t),~n=1,\cdots,K\label{eq:16}
\end{equation} and the received signals at the relays are
\begin{equation}
{\bf Y}_{R_j}(t)=\sum_{k=1}^K{\bf h}_{R_jk}(t)X_k(t)+{\bf Z}_{R_j}(t),
~J=1,\cdots,J.\label{eq:21}
\end{equation}

When relays transmit, the received signals at the receivers are
\begin{equation}
Y_n(t)=\sum_{k=1}^Kh_{nk}(t)X_k(t)+\sum_{j=1}^J{\bf h}_{nR_j}(t)^T{\bf X}_{R_j}(t)+{Z}_{n}(t).\label{eq:22}
\end{equation}

The power constraints on the transmitted signals, the channel
coefficients, and the channel noise are defined as in Section \ref{sec:mtimes-n-x}.

We denote the rate of message $w_{k}$ is $R_{k}(P)$ under power
constraint $P$. Define $\mathcal{C}(P)$ as the set of all achievable rate tuples $[R_{k}(P)]^{(k)}$ under
 power constraint $P$. The DoF is defined as in
 (\ref{eq:15}) with the sum rate now defined as $R_{\sum}(P) = \max_{\mathcal{C}(P)}
 \left(\sum_{k=1}^KR_{k}(P)\right)$. We ignore the noise terms in
 equations \eqref{eq:16}-\eqref{eq:22} in the sequel.

\section{Relay-aided Interference Alignment for X Channel without
  CSIT}\label{sec:relay-aided-interf-XX}
In this section, we provide the DoF for the $M\times N$ X channel, with the assumption that the
transmitters have no CSI, and relay nodes with global CSI are present to
help. Without CSIT, the transmitters cannot send the signals in the desired
directions to align the interference at the
receivers. However, as we shall show next, relays can be used to help the transmitters steer the
directions of the transmitted signals to achieve the DoF as if global CSI
were available at the transmitters. 

Before presenting the relay-aided interference alignment schemes, we
first find an upper bound for the DoF of the $M\times N$ X
channel without CSIT, but with relays.

\begin{proposition}
\label{prop-upperbound-forX} For the $M\times N$ X channel without
CSIT, where relays have global CSI, the DoF is upper bounded by
$\frac{MN}{M+N-1}$, regardless of the number of relays and the number
of antennas at the relays.
\end{proposition}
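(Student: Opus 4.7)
The plan is to combine two previously established facts: first, that adding CSI at the transmitters cannot decrease the achievable DoF (since any coding scheme that works without CSIT remains valid when CSIT is granted and then ignored), and second, that for the fully connected X channel with global CSI at all nodes, half-duplex relays do not increase the DoF beyond the no-relay value. The second fact is exactly the result of \cite{CadambeALLDOF} cited in the introduction, and the no-relay DoF of the $M\times N$ X channel with global CSIT is $\frac{MN}{M+N-1}$ by \cite{CadambeXNetwork}. Chaining these inequalities yields the claim.

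More concretely, I would proceed in three short steps. First, I would fix an arbitrary number $J$ of relays and antenna count $L$, and denote by $d_{\text{no-CSIT}}^{(J,L)}$ and $d_{\text{CSIT}}^{(J,L)}$ the DoF of the $M\times N$ X channel with $J$ relays of $L$ antennas, respectively without and with global CSIT. Since any encoder that ignores CSIT is a special case of an encoder that uses it, one has $d_{\text{no-CSIT}}^{(J,L)}\le d_{\text{CSIT}}^{(J,L)}$. Second, I would invoke \cite{CadambeALLDOF} to conclude that $d_{\text{CSIT}}^{(J,L)}=d_{\text{CSIT}}^{(0,0)}$, i.e., relays offer no DoF gain in the fully connected X channel with full CSI everywhere. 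Third, I would cite \cite{CadambeXNetwork} to identify $d_{\text{CSIT}}^{(0,0)}=\frac{MN}{M+N-1}$. Combining these gives
\begin{equation}
d_{\text{no-CSIT}}^{(J,L)}\le d_{\text{CSIT}}^{(J,L)}=d_{\text{CSIT}}^{(0,0)}=\frac{MN}{M+N-1},
\end{equation}
uniformly in $J$ and $L$, which is the statement of the proposition.

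The only genuine subtlety, and the step I would pay the most attention to, is the invocation of \cite{CadambeALLDOF}: one must check that the cut-set / interference decomposition argument used there applies verbatim to the model of Section \ref{sec:mtimes-n-x}, where relays are half-duplex with $L$ antennas and the channels are time-varying. Since that reference handles the fully connected X channel with arbitrary relay antenna counts under full CSI, the hypotheses match, so no new bounding argument is needed. The first inequality is an elementary monotonicity statement and the third is a direct citation; no computation is required beyond assembling the chain.
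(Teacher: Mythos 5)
Your argument is exactly the paper's: upper-bound the no-CSIT setting by granting CSIT, invoke \cite{CadambeALLDOF} to remove the relays without DoF loss, and cite \cite{CadambeXNetwork} for the value $\frac{MN}{M+N-1}$. The chain of inequalities you write out is just a more formal rendering of the paper's two-sentence proof, so the proposal is correct and takes essentially the same route.
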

\begin{proof}
The $M\times N$ X channel without CSIT with relays can be upper
bounded by the $M\times N$ X channel with CSIT and relays. Note that
here we consider arbitrary number of relays with arbitrary number of
antennas. Reference
\cite{CadambeXNetwork} showed that with global CSI at all
nodes, the optimal DoF of the $M\times N$ X channel is
$\frac{MN}{M+N-1}$. Reference \cite{CadambeALLDOF} further showed that
relaying does not increase the DoF of X channels, when all nodes are
equipped with global CSI. This means that the $(M\times N)$-user X
channel with CSIT and relays with global CSI has optimal DoF $\frac{MN}{M+N-1}$, which
is clearly an upper bound for the $M\times N$ X channel without CSIT with relays.
\end{proof}
\begin{remark}
Note that since there is no assumption about whether the channel is
time varying or not in the arguments for outerbounds on DoF in references
\cite{CadambeXNetwork,CadambeALLDOF}, the DoF upper bound we have in
     {\it Proposition \ref{prop-upperbound-forX}} is valid for both
     time varying and constant channels.
\end{remark}

Now, we can proceed to construct the relay-aided interference alignment
schemes to show that, with the help of relays, the DoF upperbound
$\frac{MN}{M+N-1}$, which is obtained
by assuming global CSIT, is in fact
achievable without CSIT. Observe that for the $K$-user X channel, the DoF upperbound
reduces to $\frac{K^2}{2K-1}$.

\subsection{$M\times N$ X Channel with $J$ Relays with $L$ antennas}
We first consider the $M\times N$ X channel with
$J$ relays each having $L$ antennas and design transmission schemes
that can achieve the DoF upper bound in {\it Proposition
  \ref{prop-upperbound-forX}} without using CSIT.

\begin{theorem}\label{thm-MxN_X_JLrelay}
For the $M\times N$ X channel with $J$ relays each
having $L$ antennas, when the transmitters have no CSIT but the relays
have global CSI, a sufficient
condition to achieve the optimal DoF $\frac{MN}{M+N-1}$ is that
$J\ge \left\lceil\frac{(M-1)(N-1)}{L^2}\right\rceil$.
\end{theorem}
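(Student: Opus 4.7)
The plan is to construct an $(M{+}N{-}1)$-slot joint beamforming scheme that delivers all $MN$ symbols per block, matching the upper bound of Proposition~\ref{prop-upperbound-forX}. Partition the block into a \emph{transmit phase} of $N$ slots followed by a \emph{relay phase} of $M-1$ slots. In transmit slot $t\in\{1,\dots,N\}$, transmitter $m$ simply emits the symbol $s_{tm}$ for message $w_{tm}$, so the $M$ transmitters jointly deliver the $M$ symbols intended for receiver $t$. Each relay $R_j$ stores $\mathbf{Y}_{R_j}(t)=\mathbf{H}_{R_j}(t)\mathbf{s}^{(t)}$, where $\mathbf{H}_{R_j}(t)=[\mathbf{h}_{R_j1}(t)\cdots\mathbf{h}_{R_jM}(t)]$ and $\mathbf{s}^{(t)}=(s_{t1},\dots,s_{tM})^T$. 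In relay slot $t'\in\{N{+}1,\dots,M{+}N{-}1\}$, relay $R_j$ transmits $\mathbf{X}_{R_j}(t')=\sum_{t=1}^{N}\mathbf{W}_j^{(t',t)}\mathbf{Y}_{R_j}(t)$, where the forwarding matrices $\mathbf{W}_j^{(t',t)}\in\mathbb{C}^{L\times L}$ are designed using the relays' global CSI.

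At receiver $n$ the transmit-phase observations are $y_n(t)=\mathbf{h}_n(t)^T\mathbf{s}^{(t)}$, which is pure desired signal when $t=n$ and pure interference from receiver $t$'s symbol group when $t\ne n$. The relay-phase observations have the form $y_n(t')=\sum_{t=1}^{N}\mathbf{a}_n^{(t',t)T}\mathbf{s}^{(t)}$, where $\mathbf{a}_n^{(t',t)}\in\mathbb{C}^M$ has $m$-th entry $\sum_j \mathbf{h}_{nR_j}(t')^T\mathbf{W}_j^{(t',t)}\mathbf{h}_{R_jm}(t)$. Since the transmit phase already provides one linear measurement of each interfering group $\mathbf{s}^{(t)}$, $t\ne n$, along the direction $\mathbf{h}_n(t)$, the $(N-1)M$ interfering symbols occupy an $(N-1)$-dimensional subspace provided the relay-phase interference is aligned with it, i.e.\ $\mathbf{a}_n^{(t',t)}$ is parallel to $\mathbf{h}_n(t)$ for every $t'>N$ and every $t\ne n$.

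This alignment condition imposes $M-1$ scalar equations per triple $(n,t,t')$ with $t\ne n$. The design problem decouples across $(t',t)$: the unknowns $\{\mathbf{W}_j^{(t',t)}\}_{j=1}^J$ carry $JL^2$ complex free parameters and must satisfy $(N-1)(M-1)$ homogeneous linear equations (one alignment set per $n\ne t$). Because the coefficients are polynomials in the independent continuous channels $\mathbf{h}_{n'R_j}(t')$, $\mathbf{h}_{R_jm}(t)$, and $\mathbf{h}_{n'm}(t)$, I would verify through a Schwartz--Zippel-style argument that the constraint matrix has full row rank almost surely, so a nontrivial solution exists whenever $JL^2\ge(M-1)(N-1)$, which is exactly the hypothesis $J\ge\lceil(M-1)(N-1)/L^2\rceil$.

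The hardest step is \emph{decodability}: after zero-forcing the aligned interference using the $N-1$ transmit-phase rows with $t\ne n$, the residual $M\times M$ desired-signal matrix at receiver $n$ has rows $\mathbf{h}_n(n)^T$ and $\mathbf{a}_n^{(t',n)T}$ for $t'=N{+}1,\dots,M{+}N{-}1$, and this matrix must be non-singular with probability one. The key observation is that the matrices $\mathbf{W}_j^{(t',n)}$ were constrained only through channels $\mathbf{h}_{n'R_j}(t')$ with $n'\ne n$, whereas $\mathbf{a}_n^{(t',n)}$ depends linearly on the statistically independent $\mathbf{h}_{nR_j}(t')$; one can therefore exhibit a specific channel realization for which the effective desired matrix is non-singular and then invoke a polynomial-identity argument to extend this to almost sure invertibility. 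Combining solvability of the alignment system with generic invertibility of the desired matrix yields $MN$ interference-free dimensions across the $N$ receivers in $M+N-1$ channel uses, establishing the DoF $\tfrac{MN}{M+N-1}$.
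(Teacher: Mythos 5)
Your overall architecture matches the paper's: an $N$-slot transmit phase delivering receiver $t$'s symbol group in slot $t$, an $(M-1)$-slot relay phase with linear forwarding matrices, alignment of each interfering group $\mathbf{s}^{(t)}$, $t\neq n$, onto the direction already observed in slot $t$, and the count of $(M-1)(N-1)$ alignment constraints against $JL^2$ free parameters. However, there is a genuine gap at the heart of your construction: your alignment system is \emph{homogeneous}, and you ask for a nontrivial solution ``whenever $JL^2\ge(M-1)(N-1)$.'' A homogeneous system whose coefficient matrix has full row rank (which is exactly what your Schwartz--Zippel step establishes) has a nontrivial kernel only when $JL^2>(M-1)(N-1)$ strictly. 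In the boundary case $JL^2=(M-1)(N-1)$ --- which the theorem explicitly covers, e.g.\ $M=N=3$ with $J=4$ single-antenna relays --- the only solution is $\mathbf{W}_j^{(t',t)}=\mathbf{0}$ for all $j,t',t$. The relays then stay silent, every relay-phase observation is pure noise, receiver $n$ is left with a single useful equation in its $M$ desired symbols, and the scheme collapses. Your decodability argument cannot rescue this, because the desired-signal rows $\mathbf{a}_n^{(t',n)}$ are identically zero.

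The paper closes this hole with the one ingredient your proposal omits: \emph{joint beamforming}, namely having transmitter $1$ also send $X_1(t')=\sum_{n=1}^{N}d_{n1}$ during each relay slot $t'$. This adds a direct-link term $h_{n1}(t')$ to the desired and interfering entries in the relay-phase rows, turning the alignment conditions into an inhomogeneous system $\mathbf{H}(\gamma,t')\mathbf{u}(\gamma,t')=\mathbf{b}(\gamma,t')$ with $\mathbf{b}\neq\mathbf{0}$ almost surely. When $\mathbf{H}$ is square and generically invertible the solution $\mathbf{H}^{-1}\mathbf{b}$ is nonzero, so the relays transmit and the desired symbols occupy the required $M$ dimensions (the paper's Remark~\ref{remark:MN_X_Joint_Beamforming} is precisely the observation that without this the precoders degenerate to zero). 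If you restrict your claim to $JL^2>(M-1)(N-1)$, your homogeneous route is viable (and the paper notes joint beamforming is then unnecessary), but as written it does not prove the stated sufficient condition $J\ge\lceil(M-1)(N-1)/L^2\rceil$.
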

\begin{proof}
For the $M\times N$ X channel, each transmitter has a
message for each receiver, and we wish to deliver the $MN$
messages to the desired receivers in $M+N-1$ slots.

We label the relays
with $R_{i}$ where $i=1,2,\cdots,J$. For slots $t=1,2,\cdots,N$, the transmitters send the messages to the
receivers, and the relays remain silent. Specifically, the signal sent from transmitter $m$ at slot $t$ is
\begin{equation}
X_m(t)=d_{tm},
\end{equation} where $d_{tm}$ is the data stream carrying the message $W_{tm}$.

The received signals at receiver $n$ and relay $R_i$ are
\begin{equation}
Y_n(t)=\sum_{m=1}^Mh_{nm}(t)d_{tm}
\end{equation}
\begin{equation}
{\bf Y}_{R_{i}}(t)=\sum_{m=1}^M{\bf h}_{R_{i}m}(t)d_{tm}.
\end{equation} where ${\bf Y}_{R_{i}}(t)\in \mathbb{C}^L$.

For slots $t^\prime=N+1,\cdots,M+N-1$, each relay $R_i$ constructs a precoding matrix ${\rm \bf
  U}_{it}(t^\prime) \in \mathbb{C}^{L\times L}$ for the signals
received in each previous slot
$t$, and transmits the following
signal in slot $t^\prime$:
\begin{equation}
X_{R_i}(t^\prime)=\sum_{t=1}^N{\rm \bf U}_{it}(t^\prime){\bf Y}_{R_i}(t)
\end{equation}

In addition, for slot $t^\prime$, transmitter $1$ also sends the
following signal to the receivers:
\begin{equation}
X_1(t^\prime)=\sum_{n=1}^Nd_{n1}.
\end{equation}

The signal received at receiver $n$ for slot $t^\prime$ is thus
\begin{align}
Y_n(t^\prime)=&
\sum_{j=1}^N h_{n1}(t^\prime)d_{j1}\nonumber\\
&\qquad+\sum_{i=1}^J\sum_{t=1}^N\sum_{m=1}^M{\bf h}_{nR_i}(t^\prime)^T{\rm \bf U}_{it}(t^\prime){\bf h}_{R_{i}m}(t)d_{tm}.
\end{align}

After combining all the received signals from every slot, the
resulting signal can be expressed as in equation \eqref{eq:5} at the
beginning of next page.
\begin{figure*}[!t]
\hrulefill
\begin{align}
{\bf Y}_n &=\begin{array}{cc}
    \\
    n\\
    \\
    t^\prime\\
  \end{array}\left[
  \begin{array}{cc}
    {\bf 0}\\
    h_{n1}(n)\\
    {\bf 0}\\
    \left[h_{n1}(t^\prime)+\sum_{i=1}^{J}{\bf
      h}_{nR_i}(t^\prime)^T{\bf U}_{in}(t^\prime){\bf h}_{R_i1}(n)\right]^{(t^\prime)}
  \end{array}
  \right]d_{n1}+\sum_{k\neq 1}\left[
  \begin{array}{cc}
    {\bf 0}\\
    h_{nk}(n)\\
    {\bf 0}\\
    \left[\sum_{i=1}^{J}{\bf
      h}_{nR_i}(t^\prime)^T{\bf U}_{in}(t^\prime){\bf h}_{R_ik}(n)\right]^{(t^\prime)}
  \end{array}
  \right]d_{nk}\nonumber\\
&+\sum_{\gamma \neq n}\left(\begin{array}{cc}
    \\
    \gamma\\
    \\
    t^\prime\\
  \end{array}\left[
  \begin{array}{cc}
    {\bf 0}\\
    h_{n1}(\gamma)\\
    {\bf 0}\\
    \left[h_{n1}(t^\prime)+\sum_{i=1}^{J}{\bf
      h}_{nR_i}(t^\prime)^T{\bf U}_{i\gamma}(t^\prime){\bf h}_{R_i1}(\gamma)\right]^{(t^\prime)}
  \end{array}
  \right]d_{\gamma 1}\right.\nonumber\\
&\qquad \qquad \qquad \qquad \qquad \qquad \qquad \qquad \qquad \qquad
\left.+\sum_{k\neq 1}\begin{array}{cc}
    \\
    \gamma\\
    \\
    t^\prime\\
  \end{array}\left[
  \begin{array}{cc}
    {\bf 0}\\
    h_{nk}(\gamma)\\
    {\bf 0}\\
    \left[\sum_{i=1}^{J}{\bf
      h}_{nR_i}(t^\prime)^T{\bf U}_{i\gamma}(t^\prime){\bf h}_{R_ik}(\gamma)\right]^{(t^\prime)}
  \end{array}
  \right]d_{\gamma k}\right)\label{eq:5}
\end{align}
\hrulefill
\end{figure*} Note that in equation \eqref{eq:5}, $n$, $t^\prime$ and $\gamma$ outside the parenthesis
of the vectors denote the $n$th, $t^\prime$th and $\gamma$th entry of
the vectors, and we have utilized the notation defined in equations (\ref{eq:8})-(\ref{eq:10}).

In order to align all the interference messages into an $N-1$
dimensional space, we choose the precoding matrices ${\bf
  U}_{it}(t^\prime)$ such that
\begin{align}
&\frac{\sum_{i=1}^J{\bf h}_{nR_i}(t^\prime)^T{\bf
    U}_{i\gamma}(t^\prime){\bf
    h}_{R_ik}(\gamma)}{h_{nk}(\gamma)}\nonumber\\
&=\frac{h_{n1}(t^\prime)+\sum_{i=1}^J{\bf h}_{nR_i}(t^\prime)^T{\bf
    U}_{i\gamma}(t^\prime){\bf h}_{R_i1}(\gamma)}{h_{n1}(\gamma)},
\end{align} which can be written as
\begin{align}
&\sum_{i=1}^J{\bf h}_{nR_i}(t^\prime)^T{\bf
  U}_{i\gamma}(t^\prime)\left(h_{n1}(\gamma){\bf
  h}_{R_ik}(\gamma)-h_{nk}(\gamma){\bf
  h}_{R_i1}(\gamma)\right)\nonumber\\
&=h_{nk}(\gamma)h_{n1}(t^\prime), \label{eq:9}
\end{align} for all $t^\prime=N+1,\cdots, M+N-1$.

If we denote the entry for $p$th row and $q$th column of matrix ${\bf
    U}_{i\gamma}(t^\prime)$ as ${
    u}_{i\gamma,p,q}(t^\prime)$, where $p,q=1,\cdots,L$, we can define
a vector
\begin{equation}
{\bf u}(\gamma,t^\prime)=[{u}_{i\gamma,p,q}(t^\prime)]^{(ipq)},
\end{equation} where the notation
$[{u}_{i\gamma,p,q}(t^\prime)]^{(ipq)}$ is defined as in equation
\eqref{eq:8}-\eqref{eq:10}. We also define vectors
\begin{align}
&{\bf h}_{n,k}(\gamma,t^\prime)=\left[h_{nR_i,p}(t^\prime)\left(h_{n1}(\gamma)h_{R_ik,q}(\gamma)\right.\right.\nonumber\\
&\qquad \left. \left. -h_{nk}(\gamma)h_{R_i1,q}(\gamma)\right)\right]^{(ipq)},
\end{align} and matrix ${\bf H}(\gamma,t^\prime)$, which is
formed by taking ${\bf
  h}_{n,k}(\gamma,t^\prime)^T$ as its rows for all enumeration of $n$
and $k$.

All the linear equations can now be written as
\begin{equation}
{\bf H}(\gamma,t^\prime){\bf u}(\gamma,t^\prime)={\bf b}(\gamma,t^\prime),\label{eq:20}
\end{equation} where ${\bf
  b}(\gamma,t^\prime)=[h_{nk}(\gamma)h_{n1}(t^\prime)]^{(nk)}$.

Since we have one equation for each pair of $(n,k)$ where $k\neq 1,n \neq
\gamma$, there are $(M-1)(N-1)$ equations for each pair of fixed
$(\gamma,t^\prime)$.  On the other hand, each matrix ${\bf
  U}_{i\gamma}(t^\prime)$ can provide $L^2$ variables, which gives us $JL^2$ variables in
total. When $J\ge
\left\lceil\frac{(M-1)(N-1)}{L^2}\right\rceil$, we can guarantee that
there exist solutions to the equations to find the matrices ${\bf
  U}_{i\gamma}(t^\prime)$.

In the sequel, we drop the parameters $(\gamma,t^\prime)$ in the
expression for matrix ${\bf H}(\gamma,t^\prime)$ for
clarity. Since the channel coefficients are drawn from a continuous
distribution, the matrix ${\bf
  H}$ is of full rank almost surely. When the matrix ${\bf H}$ is square, the relays can find the precoding
matrix by calculating ${\bf  H}^{-1}{\bf b}$. When the matrix ${\bf
  H}$ is not square, since $J\ge
\left\lceil\frac{(M-1)(N-1)}{L^2}\right\rceil$, the vector ${\bf u}$ can be calculated using ${\bf
    H}^\dagger({\bf HH}^\dagger)^{-1}{\bf b}$. The calculation of the
precoding matrices for both cases only requires
global CSI at the relays and no cooperation between the relays is needed.

With the matrices ${\bf
  U}_{i\gamma}(t^\prime)$, all the interfering signals can be aligned
into an $N-1$ dimensional space. We now need to verify that the
interference and the signals carrying intended messages are linearly
independent. Since for receiver $n$, the signals carrying intended
messages and the interfering signals do not have non-zero entries in
the same row of the received signal vector from row 1 to row $N$, as shown in
equation (\ref{eq:5}), it is guaranteed that the signals are linearly
independent. As an example, consider the channel with $M=N=3$ and
receiver 1. The received signal is of the form

\begin{equation}
{\bf Y}_1=\left[
  \begin{array}{c}
    { a}\\
    { 0}\\
    { 0}\\
    { b}\\
    { c}
  \end{array}
  \right] d_{intended}+\left[
  \begin{array}{c}
    { 0}\\
    { e}\\
    { 0}\\
    { g}\\
    { h}
  \end{array}
  \right] d_{interference}. \label{eq:25}
\end{equation} It can be readily seen that the signal vectors carrying
intended messages and the ones with interference in equation
(\ref{eq:25}) are linearly independent.

This special structure of the received signals is
originated from the design of the transmission scheme. The fact that
the channel coefficients are drawn
from a continuous distribution guarantee that the desired data streams occupy the rest $M$
dimensional space, and thus a zero-forcing decoder can recover all the
desired messages to achieve DoF $\frac{MN}{M+N-1}$.
\end{proof}

\begin{remark}\label{remark:MN_X_Joint_Beamforming}
Note that in the scheme, when $JL^2=(M-1)(N-1)$, joint beamforming is
mandatory to obtain the precoding matrices ${\bf
  U}_{i\gamma}(t^\prime)$. This is because when $JL^2=(M-1)(N-1)$, the
matrix ${\bf H}(\gamma,t^\prime)$ in equation \eqref{eq:20} is invertible, and the vector ${\bf
  b}(\gamma,t^\prime)$ becomes zero if joint beamforming is not
utilized. The precoding matrices at the relays thus are all zero. In
this way, the interference can still be aligned since they occupy
different time indices than the intended signals, but there is not
sufficient dimension to decode the intended messages. Joint
beamforming, for this case, can guide the relays to steer the
interference such that they are aligned, and in the meantime guarantee
that there is sufficient signal dimension. On the other hand, when
$JL^2>(M-1)(N-1)$, joint beamforming is not required, since we can
always find a non-zero vector from the null space of matrix ${\bf H}(\gamma,t^\prime)$.
$\hfill{} \blacksquare$\end{remark}

\begin{remark}\label{remark-MN_X_JL_timevarying}
In the transmission scheme, the time varying nature of the channel is
crucial for the receivers to decode the intended signals. From
equation \eqref{eq:5}, we can see that when channel is not time
varying, the intended signals fall into a space of dimension 2, and
the receivers cannot decode all the intended messages.
$\hfill{} \blacksquare$\end{remark}

\begin{remark}\label{remark-1}
For the X channel without CSIT and without relays, reference
\cite{VazeNoCSIT} has shown that the DoF upperbound is 1 when the
channel experiences Rayleigh fading, provided that
all nodes are equipped with a single antenna. Our result shows that
relaying is useful to provide DoF gain for the X channel without
CSIT. This is to be contrasted with the result in reference
\cite{CadambeALLDOF}, which has
shown that relaying cannot provide DoF
gain for the X channel when global CSI is available at all the nodes.
$\hfill{} \blacksquare$\end{remark}

\begin{remark}
We have shown that using relays, we can achieve the optimal DoF for
the $M\times N$ X channel in finite channel uses. However, for the case with
global CSIT but no relays are available, the same optimal DoF is 
achievable using infinite channel uses, as shown in reference
\cite{CadambeXNetwork}.
$\hfill{} \blacksquare$\end{remark}

\begin{remark}\label{remark_X_MIMO}
The scheme we used in {\it Theorem \ref{thm-MxN_X_JLrelay}} can be
generalized to the case when each user has multiple antennas by
counting the number of equations required for interference alignment
and the number of variables that can be provided by the relays.
$\hfill{} \blacksquare$\end{remark}

We next investigate a special case of the general X channel, which is
the $K$-user X channel with a single relay equipped with multiple
antennas. For this case, we can design a different scheme using the
available spatial dimension at the relay, which can provide more insights
regarding how interference signals are aligned and has lower computational
complexity for the relay to obtain the precoding matrices.

\subsection{The $K$-user X Channel with one Multi-antenna Relay}\label{sec:k-user-x}
For the $K$-user X channel without CSIT, when we have a relay with $K$ antennas,
the relay can decode all the data streams sent from the transmitters,
for example with zero-forcing,
if each transmitter only sends a single data stream with DoF 1. Since
the relay has global CSI, clearly it can perform appropriate precoding
to align the interfering signals at the receivers. The result from
{\it Theorem \ref{thm-MxN_X_JLrelay}} implies that for this case,
$K-1$ antennas are in fact sufficient for the relay to align the
interference at the receivers to achieve the DoF upperbound
$\frac{K^2}{2K-1}$.

When the relay has multiple antennas, we can use a different strategy
than the one we used to prove {\it Theorem \ref{thm-MxN_X_JLrelay}}
to achieve the DoF upper bound. To better illustrate the transmission strategy, we first
provide an example for the 3-user X channel with a relay having 2
antennas, and then generalize the scheme to the $K$-user case. Note
that when $K=3$, the DoF upperbound becomes $\frac{9}{5}$.

\subsubsection{3-user X Channel with a Relay with 2 Antennas}
\begin{corollary}\label{prop-3x3-X-2antRelay}
For the $3$-user X channel without CSIT with relays, optimal DoF $\frac{9}{5}$ is achievable using a relay
with 2 antennas and global CSI.
\end{corollary}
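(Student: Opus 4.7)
The plan is to specialize the scheme of \textit{Theorem \ref{thm-MxN_X_JLrelay}} to the parameters $M=N=3$, $L=2$, $J=1$ and to verify that the balanced alignment system it produces is solvable almost surely.

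First, I would lay out the $M+N-1=5$ slot frame. In slots $t=1,2,3$ transmitter $m$ sends the single symbol $X_m(t)=d_{tm}$ carrying $w_{tm}$, and the two-antenna relay records ${\bf Y}_R(t)$ without transmitting. In slots $t'=4,5$ the relay transmits
\begin{equation}
{\bf X}_R(t') = \sum_{t=1}^{3}{\bf U}_t(t'){\bf Y}_R(t),
\end{equation}
with $2\times 2$ precoding matrices ${\bf U}_t(t')$ to be designed, while transmitter $1$ additionally superposes $X_1(t')=d_{11}+d_{21}+d_{31}$ and transmitters $2,3$ stay silent. This transmitter-$1$ superposition is the joint-beamforming ingredient which, as explained in \textit{Remark \ref{remark:MN_X_Joint_Beamforming}}, is mandatory in the balanced regime $JL^2=(M-1)(N-1)=4$, for otherwise the right-hand side of the alignment system would vanish and the precoders would be forced to zero.

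Next, at each receiver $n$ I would write down the alignment conditions of equation (\ref{eq:9}), one equation per $(n,k)$ with $n\neq\gamma$ and $k\neq 1$, for every fixed $(\gamma,t')\in\{1,2,3\}\times\{4,5\}$. This yields a $4\times 4$ linear system ${\bf H}(\gamma,t'){\bf u}(\gamma,t')={\bf b}(\gamma,t')$ in the four entries of ${\bf U}_\gamma(t')$, with a nonzero right-hand side owing to the superposition above. The step I expect to be the main obstacle is to show that ${\bf H}(\gamma,t')$ is nonsingular almost surely: since its entries are polynomials in the channel coefficients drawn independently from a continuous distribution, it suffices to exhibit one channel realization for which $\det{\bf H}(\gamma,t')\neq 0$, after which almost-sure invertibility follows because the zero set of a nonzero polynomial has Lebesgue measure zero.

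Finally, I would verify that the desired and the aligned-interference signals at each receiver span complementary subspaces. By construction, every desired stream $d_{nk}$ at receiver $n$ contributes a nonzero entry in row $n$ of the $5$-dimensional received vector while vanishing in all other rows $1,\ldots,N$, and every interference stream $d_{\gamma k}$ with $\gamma\neq n$ vanishes in row $n$; this is exactly the combinatorial support pattern illustrated in equation (\ref{eq:25}). Together with the time-varying nature of the channel, which as noted in \textit{Remark \ref{remark-MN_X_JL_timevarying}} prevents the three desired-stream vectors from being proportional across slots $t'=4,5$, this forces the three desired streams to occupy a $3$-dimensional subspace disjoint from the $2$-dimensional aligned-interference subspace. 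A zero-forcing decoder at each receiver then recovers all nine messages in five slots, yielding the optimal DoF $\tfrac{9}{5}$.
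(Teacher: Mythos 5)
Your proof is correct, but it takes a genuinely different route from the one the paper gives for this corollary. You obtain the result by direct specialization of \textit{Theorem \ref{thm-MxN_X_JLrelay}} to $M=N=3$, $L=2$, $J=1$ (noting $JL^2=(M-1)(N-1)=4$, the balanced regime), so the whole argument reduces to solving one $4\times 4$ system ${\bf H}(\gamma,t'){\bf u}(\gamma,t')={\bf b}(\gamma,t')$ per $(\gamma,t')$ and checking the support-pattern/linear-independence argument of equation \eqref{eq:25}; the paper itself acknowledges in Section \ref{sec:k-user-x} that the corollary is implied by \textit{Theorem \ref{thm-MxN_X_JLrelay}} in exactly this way. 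The paper's own proof instead builds a structured scheme from scratch: the relay first performs \emph{partial} interference alignment by choosing vectors ${\bf u}_i(t)$ that reproduce the direct-link coefficients $h_{ik}(t)$ at the unintended receivers, then transmits rank-one combinations $\alpha_{ij}(t'){\bf v}^{\perp}(t')\bigl({\bf u}^T{\bf Y}_R\bigr)$ with beamformers ${\bf v}^{\perp}$ chosen orthogonal to selected relay-to-receiver channels, and finally aligns the one residual stream per receiver via the scalars $\alpha_{ij}(t')$ in \eqref{eq:2} and \eqref{eq:6}, with all transmitters (not just transmitter 1) sending $d_{kk}$ in slots 4 and 5. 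Your route is shorter and leans on the already-established general theorem; the paper's route buys a more transparent physical interpretation of how the interference gets aligned and a lower computational cost (several $2\times 2$ solves rather than a $4\times 4$ inversion). Both arguments leave the almost-sure nonsingularity of the relevant matrices at the same level of rigor (a nonzero polynomial vanishes on a measure-zero set), so your version is an acceptable, self-contained alternative proof.
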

\begin{proof}
We denote the data stream from transmitter $i$ to receiver $j$ as
$d_{ji}$, $i,j=1,2,3$, and data stream $d_{ji}$ carries a message $w_{ji}$. To achieve the DoF $\frac{9}{5}$, we let each
transmitter send one message to each receiver in 5 time slots. Note
that the channel is assumed to be time varying for each slot, and the channel coefficients
are drawn from a continuous distribution.

In the first 3 slots, the transmitters send messages to the
receivers, while the relay keeps silent. Specifically, in slot
$t$, all the 3 transmitters send the messages intended for receiver $t$:
\begin{equation}
X_{k}(t)=d_{tk}
\end{equation}
where $t,k=1,2,3.$ At slot $t$, the received signals at the receivers
and the relay are
\begin{align}
&Y_{1}(t)=h_{11}(t)d_{t1}+h_{12}(t)d_{t2}+h_{13}(t)d_{t3}\\
&Y_{2}(t)=h_{21}(t)d_{t1}+h_{22}(t)d_{t2}+h_{23}(t)d_{t3}\\
&Y_{3}(t)=h_{31}(t)d_{t1}+h_{32}(t)d_{t2}+h_{33}(t)d_{t3}\\
&{\bf Y}_R(t)={\bf h}_{R1}(t)d_{t1}+{\bf h}_{R2}(d)_{t2}+{\bf h}_{R3}(t)d_{t3},
\end{align} where we discarded the channel noise since we are
considering the DoF of the channel.

In the remaining 2 slots, the relay needs to provide each receiver
with two more equations such that the intended
messages, which are the unknown variables $d_{ji}$ in the equations,
can be recovered. In the meantime, all the interference data
streams must be kept in a 2-dimensional space at each receiver to
achieve the optimal DoF. Since the relay has 2
antennas, it cannot decode all the three messages from each user to perform appropriate precoding in the remaining 2
slots. However, as we shall see, the spatial dimensions available at
the relay can still be utilized to align the interference.

The relay first performs a linear transformation to the received
signals using vectors ${\bf u}_{i}(t)\in \mathbb{C}^2$, where
$i,t=1,2,3$, $i\neq t$. Specifically, for
$t=1$, we want to partially align the interference caused by the
messages intended for receiver 1. We
design the vectors ${\bf u}_{2}(1)$ and ${\bf u}_{3}(1)$ such that they satisfy
\begin{equation}
{\bf u}_2(1)^T{\bf h}_{R2}(1)=h_{22}(1)\qquad{\bf u}_2(1)^T{\bf h}_{R3}(1)=h_{23}(1),
\end{equation}
\begin{equation}
{\bf u}_3(1)^T{\bf h}_{R2}(1)=h_{32}(1)\qquad{\bf u}_3(1)^T{\bf h}_{R3}(1)=h_{33}(1).
\end{equation}

Since we have two variables with two equations for each vector ${\bf
  u}_i(1)$ and the channel
coefficients are drawn from a continuous distribution, we can
guarantee the existence of ${\bf u}_2(1)$ and ${\bf u}_3(1)$ almost
surely. We can then obtain the following signals by taking the inner
products between the vector ${\bf u}_2(1)$ $({\bf u}_3(1))$ and the received signal vector from slot 1:
\begin{align}
&{\bf u}_2(1)^T{\bf Y}_{R}(1)={\bf u}_2(1)^T{\bf h}_{R1}(1)d_{11}+h_{22}(1)d_{12}+h_{23}(1)d_{13}\label{eq:23}\\
&{\bf u}_3(1)^T{\bf Y}_{R}(1)={\bf u}_3(1)^T{\bf h}_{R1}(1)d_{11}+h_{32}(1)d_{12}+h_{33}(1)d_{13}\label{eq:24}.
\end{align}

These two signals are useful for receiver 1, since they
contain the messages that are intended for it. However, the messages
$d_{11}$, $d_{12}$ and $d_{13}$ are interference for receiver 2 and 3. Using
the linear transformation provided by vector ${\bf u}_2(1)$ or ${\bf
  u}_3(1)$, we can see that the channel coefficients for $d_{12}$ and
$d_{13}$ in equation \eqref{eq:23} are the same as the signal received at
receiver 2. Similarly, the channel coefficients for $d_{12}$ and
$d_{13}$ in equation \eqref{eq:24} are the same as the signal received
at receiver 3. If we can keep ${\bf
  u}_2(1)^T{\bf Y}_{R}(1)$ away from receiver 3, and keep ${\bf
  u}_3(1)^T{\bf Y}_{R}(1)$ away from receiver 2, part of
the interference is aligned at receiver 2 and receiver 3. This can be done
by sending ${\bf
  u}_2(1)^T{\bf Y}_{R}(1)$ and ${\bf
  u}_3(1)^T{\bf Y}_{R}(1)$ along the directions
\begin{equation}
{\bf v}_{12}(t) \perp {\bf h}_{3R}(t)\quad{\rm and}\quad{\bf v}_{13}(t) \perp {\bf h}_{2R}(t),
\end{equation} respectively, where $t=4,5$.

For the interference caused by the messages for receiver 2 and
receiver 3, we design the 
precoding vectors ${\bf u}_1(2)$, ${\bf u}_3(2)$, ${\bf u}_1(3)$, and ${\bf
  u}_2(3)$ in the same fashion as we design the vectors ${\bf
  u}_2(1)$, ${\bf u}_3(1)$, which have the following properties:
\begin{align}
&{\bf u}_1(2)^T{\bf Y}_{R}(2)={\bf u}_1(2)^T{\bf h}_{R2}(2)d_{22}+h_{11}(2)d_{21}+h_{13}(2)d_{23}\\
&{\bf u}_3(2)^T{\bf Y}_{R}(2)={\bf u}_3(2)^T{\bf h}_{R2}(2)d_{22}+h_{31}(2)d_{21}+h_{33}(2)d_{23}\\
&{\bf u}_1(3)^T{\bf Y}_{R}(3)={\bf u}_1(3)^T{\bf h}_{R3}(3)d_{33}+h_{11}(3)d_{31}+h_{12}(3)d_{32}\\
&{\bf u}_2(3)^T{\bf Y}_{R}(3)={\bf u}_2(3)^T{\bf h}_{R3}(3)d_{33}+h_{21}(3)d_{31}+h_{22}(3)d_{32}.
\end{align}

In order to transmit the signals along their intended directions, we
now define the following beamforming vectors, which is similar as the
vectors ${\bf v}_{12}(t)$ and ${\bf v}_{13}(t)$:
\begin{equation}
{\bf v}_{21}(t) \perp {\bf h}_{3R}(t)\quad{\bf v}_{23}(t) \perp {\bf h}_{1R}(t)
\end{equation}
\begin{equation}
{\bf v}_{31}(t) \perp {\bf h}_{2R}(t)\quad{\bf v}_{32}(t) \perp {\bf h}_{1R}(t)
\end{equation}
where $t=4,5$. We can choose the vectors such that they have unit
power, and satisfy
\begin{equation}
{\bf v}_{31}(t)={\bf v}_{13}(t)={\bf v}_{2}^{\perp}(t),
\end{equation}
\begin{equation}
{\bf v}_{12}(t)={\bf v}_{21}(t)={\bf v}_{3}^{\perp}(t),
\end{equation}
\begin{equation}
{\bf v}_{23}(t)={\bf v}_{32}(t)={\bf v}_{1}^{\perp}(t).
\end{equation}

Using the linear transformation and beamforming provided above, interference is only
partially aligned. To align the rest of the interference, we let the
relay choose a scaling factor $\alpha_{ij}(t)$ for each signal it
wishes to send to the receivers, and produce the signals
to be transmitted for slot 4 and slot 5 as shown in equation
\eqref{eq:333} at the beginning of next page, where the scalars $\alpha_{ij}(t)$ are to be determined later.
\begin{figure*}[!t]
\hrulefill
\begin{align}
{\bf X}_R(t) &= \alpha_{12}(t){\bf v}_{3}^{\perp}(t)\left({\bf
  u}_2(1)^T{\bf Y}_{R}(1)\right)+\alpha_{13}(t){\bf
  v}_{2}^\perp(t)\left({\bf u}_3(1)^T{\bf
  Y}_{R}(1)\right)+\alpha_{21}(t){\bf v}_{3}^\perp(t)\left({\bf
  u}_1(2)^T{\bf Y}_{R}(2)\right)\nonumber\\
&+\alpha_{23}(t){\bf v}_{1}^\perp(t)\left({\bf u}_3(2)^T{\bf
  Y}_{R}(2)\right)+\alpha_{31}(t){\bf v}_{2}^\perp(t)\left({\bf
  u}_1(3)^T{\bf Y}_{R}(3)\right)+\alpha_{32}(t){\bf
  v}_{1}^\perp(t)\left({\bf u}_2(3)^T{\bf Y}_{R}(3)\right)\label{eq:333}
\end{align} 
\hrulefill
\end{figure*}

For slots 4 and 5, the transmitters also send the following signals
to the receivers:
\begin{equation}
X_k(t)=d_{kk}
\end{equation} where $k=1,2,3.$ Note that other combinations of transmitted
messages also work for our scheme.

The received signals at the receivers can be expressed as
\begin{align}
&Y_1(t)=h_{11}(t)d_{11}+h_{12}(t)d_{22}+h_{13}(t)d_{33}+{\bf h}_{1R}(t)^T{\bf X}_R(t)\\
&Y_2(t)=h_{21}(t)d_{11}+h_{22}(t)d_{22}+h_{23}(t)d_{33}+{\bf h}_{2R}(t)^T{\bf X}_R(t)\\
&Y_3(t)=h_{31}(t)d_{11}+h_{32}(t)d_{22}+h_{33}(t)d_{33}+{\bf h}_{3R}(t)^T{\bf X}_R(t)
\end{align}

If we combine all the received signals from 5 slots into a vector in
$\mathbb{C}^5$, the resulting signal is shown in equation \eqref{eq:1} at
the beginning of next page.
\begin{figure*}[!t]
\begin{align}
&{\bf Y}_1 =\left[
  \begin{array}{cc}
    h_{11}(1) \\
    0\\
    0\\
    h_{11}(4)+\alpha_{12}(4)h_{1R}^{\perp 3}(4)\mu_2^{R1}(1)+\alpha_{13}(4)h_{1R}^{\perp 2}(4)\mu_3^{R1}(1) \\
    h_{11}(5)+\alpha_{12}(5)h_{1R}^{\perp 3}(5)\mu_2^{R1}(1)+\alpha_{13}(5)h_{1R}^{\perp 2}(5)\mu_3^{R1}(1) \\
  \end{array}
  \right]d_{11}+\left[
  \begin{array}{cc}
    h_{12}(1) \\
    0\\
    0\\
    \alpha_{12}(4)h_{1R}^{\perp 3}(4)h_{22}(1)+\alpha_{13}(4){h}_{1R}^{\perp 2}(4)h_{32}(1) \\
    \alpha_{12}(5)h_{1R}^{\perp 3}(5)h_{22}(1)+\alpha_{13}(5){h}_{1R}^{\perp 2}(5)h_{32}(1) \\
  \end{array}
  \right]d_{12}\nonumber\\
&+\left[
  \begin{array}{cc}
    h_{13}(1) \\
    0\\
    0\\
    \alpha_{12}(4)h_{1R}^{\perp 3}(4)h_{23}(1)+\alpha_{13}(4){h}_{1R}^{\perp 2}(4)h_{33}(1) \\
    \alpha_{12}(5)h_{1R}^{\perp 3}(5)h_{23}(1)+\alpha_{13}(5){h}_{1R}^{\perp 2}(5)h_{33}(1) \\
  \end{array}
  \right]d_{13}+\left[
  \begin{array}{cc}
    0\\
    h_{11}(2) \\
    0\\
    \alpha_{21}(4){h}_{1R}^{\perp 3}(4)h_{11}(2) \\
    \alpha_{21}(5){h}_{1R}^{\perp 3}(5)h_{11}(2) \\
  \end{array}
  \right]d_{21}+\left[
  \begin{array}{cc}
    0\\
    h_{13}(2) \\
    0\\
    \alpha_{21}(4){h}_{1R}^{\perp 3}(4)h_{13}(2) \\
    \alpha_{21}(5){h}_{1R}^{\perp 3}(5)h_{13}(2) \\
  \end{array}
  \right]d_{23} \nonumber\\
&+\left[
  \begin{array}{cc}
    0\\
    h_{12}(2) \\
    0\\
    h_{12}(4)+\alpha_{21}(4){h}_{1R}^{\perp 3}(4)\mu_1^{R2}(2) \\
    h_{12}(5)+\alpha_{21}(5){h}_{1R}^{\perp 3}(5)\mu_1^{R2}(2) \\
  \end{array}
  \right]d_{22}+\left[
  \begin{array}{cc}
    0\\
    0\\
    h_{11}(3) \\
    \alpha_{31}(4){h}_{1R}^{\perp 2}(4)h_{11}(3) \\
    \alpha_{31}(5){h}_{1R}^{\perp 2}(5)h_{11}(3) \\
  \end{array}
  \right]d_{31}\nonumber\\
&\qquad \qquad \qquad \qquad \qquad \qquad \qquad \qquad \qquad +\left[
  \begin{array}{cc}
    0\\
    0\\
    h_{12}(3) \\
    \alpha_{31}(4){h}_{1R}^{\perp 2}(4)h_{12}(3) \\
    \alpha_{31}(5){h}_{1R}^{\perp 2}(5)h_{12}(3) \\
  \end{array}
  \right]d_{32}+\left[
  \begin{array}{cc}
    0\\
    0\\
    h_{13}(3) \\
    h_{13}(4)+\alpha_{31}(4){h}_{1R}^{\perp 2}(4){\mu}_1^{R3}(3) \\
    h_{13}(5)+\alpha_{31}(5){h}_{1R}^{\perp 2}(5){\mu}_1^{R3}(3) \\
  \end{array}
  \right]d_{33} \label{eq:1}
\end{align}
\hrulefill
\end{figure*}
where we denote $h_{kR}^{\perp i}(t)={\bf h}_{kR}(t)^T{\bf v}_{i}^\perp(t)$, $\mu_k^{Ri}(t)={\bf u}_k(t)^T{\bf h}_{Ri}(t)$.

From the above expression, we can see that the data streams $d_{21}$
and $d_{23}$ are aligned in a one-dimensional space, and the data
streams $d_{31}$ and $d_{32}$ are aligned in a one dimensional
space. To align the data stream $d_{22}$ with $d_{21}$ and $d_{23}$, 
we choose
\begin{equation}
\frac{h_{12}(t)+\alpha_{21}(t){h}_{1R}^{\perp 3}(t)\mu_1^{R2}(2)}{h_{12}(2)}=\alpha_{21}(t){h}_{1R}^{\perp 3}(t),
\end{equation} which is equivalent as
\begin{equation}
\alpha_{21}(t)=\frac{h_{12}(t)}{(h_{12}(2)-\mu_1^{R2}(2))h_{1R}^{\perp 3}(t)},\label{eq:2}
\end{equation} where $t=4,5.$

Similarly, to align $d_{33}$ with $d_{31}$ and $d_{32}$ we choose
\begin{equation}
\alpha_{31}(t)=\frac{h_{13}(t)}{(h_{13}(3)-\mu_1^{R3}(3))h_{1R}^{\perp 2}(t)}.\label{eq:6}
\end{equation}

The remaining parameters
$\alpha_{12}(t),\alpha_{32}(t),\alpha_{13}(t),\alpha_{23}(t)$ can be determined in a similar fashion. It is
easy to verify that the data streams $d_{11}$, $d_{12}$ and $d_{13}$
still occupy a 3-dimension space with the specified parameters
$\alpha_{ij}(t)$. This argument holds at receiver 2 and receiver 3 as
well. Hence using the proposed scheme, we can transmit a total of 9
messages using 5 slots, which proves the achievability of DoF
$\frac{9}{5}$.
\end{proof}

\begin{remark}
We can see from equations \eqref{eq:2} and \eqref{eq:6} that joint
beamforming is a key step to achieve the DoF upper bound. This is
because without joint beamforming, i.e., transmitters stay silent for
slot 4 and slot 5, the channel coefficients $h_{ij}(4)$ and
$h_{ij}(5)$ are all zero. As a result, all the parameters $\alpha_{ij}(t)$
become zero. Similar as {\it Remark
  \ref{remark:MN_X_Joint_Beamforming}}, without joint beamforming, the
interference signals can still be aligned, but there is not sufficient
dimension for the receivers to decode the intended signals.
$\hfill{} \blacksquare$\end{remark}

\begin{remark}
From equation \eqref{eq:1}, we can see that the channel needs to be
time varying for the receivers to have sufficient dimension to decode
the intended signals, following similar arguments as in {\it Remark \ref{remark-MN_X_JL_timevarying}}.
$\hfill{} \blacksquare$\end{remark}

The idea of the above transmission strategy is to use the limited
spatial dimensions available at the relay to first {\it partially} align the
interference, and then align the rest of the interference through joint
beamforming with the transmitters. Without the relay, the transmitters
cannot send the signals at the intended directions for interference
alignment since there is no CSIT, and reference
\cite{VazeNoCSIT} has shown that the DoF of the X channel for this
case collapses to 1. The advantage of having the relays to assist
interference alignment 
is thus obvious. Using the ideas from the example for the 3-user X channel with a 2-antenna
relay, we can now generalize the result to the $K$-user case.

\subsubsection{$K$-user X Channel with one $(K-1)$-antenna Relay}
\begin{corollary}\label{theorem:k-user-x-multiant-relay}
For the $K$-user X channel without CSIT with relays, the optimal DoF
$\frac{K^2}{2K-1}$ is achievable using one relay with $K-1$ antennas and
global CSI.
\end{corollary}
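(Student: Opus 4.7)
The plan is to generalize the 3-user construction of {\it Corollary \ref{prop-3x3-X-2antRelay}} to arbitrary $K$, delivering all $K^2$ messages $d_{nk}$ in exactly $2K-1$ channel uses so as to attain the target DoF $\frac{K^2}{2K-1}$. I would use the same two-phase skeleton. In the first $K$ slots the relay only listens, and in slot $t$ every transmitter $k$ sends $X_k(t)=d_{tk}$, so that the messages destined for receiver $t$ are transmitted together in slot $t$. In the last $K-1$ slots $t'=K{+}1,\ldots,2K{-}1$, each transmitter $k$ simultaneously feeds in its diagonal message $X_k(t')=d_{kk}$ (the joint beamforming component), while the relay forwards a carefully processed linear combination of its earlier observations ${\bf Y}_R(t)$.

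The relay design relies on three ingredients, each existing generically by the continuity of the channel distribution. First, for each pair $(n,t)$ with $n\neq t$, a partial-alignment vector ${\bf u}_n(t)\in\mathbb{C}^{K-1}$ chosen so that ${\bf u}_n(t)^T{\bf h}_{Rk}(t)=h_{nk}(t)$ for every $k\neq t$; this is a linear system of $K-1$ equations in $K-1$ unknowns, uniquely solvable almost surely. Second, for each triple $(t,n,t')$ with $n\neq t$, a beamforming vector ${\bf v}_{tn}(t')\in\mathbb{C}^{K-1}$ orthogonal to ${\bf h}_{jR}(t')$ for every $j\notin\{n,t\}$; this imposes only $K-2$ constraints in $K-1$ dimensions, so a nonzero vector always exists. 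Third, scalar weights $\alpha_{tn}(t')$ fixed by the alignment conditions below. The relay then transmits
\begin{equation*}
{\bf X}_R(t')=\sum_{t=1}^{K}\sum_{n\neq t}\alpha_{tn}(t'){\bf v}_{tn}(t')\bigl({\bf u}_n(t)^T{\bf Y}_R(t)\bigr).
\end{equation*}

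I would then stack the signals at an arbitrary receiver $n$ across all $2K-1$ slots and exploit the orthogonality of the ${\bf v}_{tn}(t')$'s, which kills every relay summand except those whose second index equals $n$ (aligning interference from a bloc $t\neq n$) or whose first index equals $n$ (feeding the intended-signal space). For the first group, the defining property of ${\bf u}_n(t)$ ensures that the slot-$t'$ coefficient of each interferer $d_{tk}$ with $k\neq t$ equals $h_{nk}(t)$ scaled uniformly by $\beta_{tn}(t'):=\alpha_{tn}(t'){\bf h}_{nR}(t')^T{\bf v}_{tn}(t')$, so the $K-1$ messages $d_{tk}$ ($k\neq t$) are already collinear across slot $t$ and the $K-1$ slots $t'$. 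The remaining diagonal interferer $d_{tt}$ of the same bloc is folded onto the same direction by choosing $\alpha_{tn}(t')$ to satisfy
\begin{equation*}
\beta_{tn}(t')\bigl(h_{nt}(t)-{\bf u}_n(t)^T{\bf h}_{Rt}(t)\bigr)=h_{nt}(t'),
\end{equation*}
which is well-posed almost surely and whose right-hand side is nonzero precisely because the transmitter boosts $d_{tt}$ directly in slot $t'$. Aggregating over all $t\neq n$ packs the $K(K-1)$ interferers at receiver $n$ into a $(K-1)$-dimensional subspace.

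It remains to show that the $K$ intended messages $d_{n,1},\ldots,d_{n,K}$ occupy a separate $K$-dimensional subspace, so a zero-forcing decoder succeeds. Since each intended column is supported only on slot $n$ and on slots $\{t'\}$ while each interference bloc is supported only on slot $t\neq n$ and on $\{t'\}$, the zero-pattern outside the $K-1$ joint slots already separates intended and interference subspaces, in direct analogy with equation \eqref{eq:25}. The main obstacle I anticipate is the last algebraic step: certifying that the $K\times K$ submatrix of intended-signal coefficients, taken across slot $n$ and slots $K{+}1,\ldots,2K{-}1$, remains full rank after the $\alpha_{tn}(t')$'s have been pinned down by the alignment equations. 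I expect this to follow from a generic non-degeneracy argument, since the relevant determinant is a nonzero rational function of the independent, continuously distributed channel coefficients; the joint beamforming terms $h_{nk}(t')$ are injected precisely to prevent this polynomial from collapsing to zero identically, as was observed in the $K=3$ case. Completing this last verification carefully is the only place where real effort is needed.
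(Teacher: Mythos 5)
Your proposal is correct and follows essentially the same construction as the paper's Appendix proof: the same $K$-plus-$(K-1)$ slot structure, the same partial-alignment vectors ${\bf u}_n(t)$, null-space beamforming vectors ${\bf v}_{tn}(t')$, joint beamforming of the diagonal messages $d_{kk}$, and the same scalar alignment condition fixing $\alpha_{tn}(t')$ (your version of that equation is in fact the cleanly stated one; the paper's displayed formula for $\alpha_{\gamma m}(t')$ contains a stray factor). The final full-rank verification you flag is handled in the paper by the same generic-position reasoning you sketch, so no gap remains.
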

\begin{proof}
The achievability of DoF $\frac{K^2}{2K-1}$ follows the idea from {\it Corollary
  \ref{prop-3x3-X-2antRelay}}, and the detailed scheme is provided in Appendix
\ref{sec:proof-theor-k-user-x-multiant-relay}.
\end{proof}

\begin{remark}The schemes
provided in {\it Corollary \ref{prop-3x3-X-2antRelay}} and {\it
  Corollary \ref{theorem:k-user-x-multiant-relay}} can be seen as
specific construction of the precoding matrices at the relay, where partial
interference alignment and joint beamforming are utilized. The scheme
we used in {\it Corollary \ref{prop-3x3-X-2antRelay}} and {\it
  Corollary \ref{theorem:k-user-x-multiant-relay}} has more of a
straight forward physical
interpretation, and more importantly, it has lower computational complexity since it
only requires $K\times K$ matrix inversion when
finding the vectors ${\bf u}_i(t)$ and ${\bf v}_{ti}(t^\prime)$. In comparison,
the scheme we used in {\it Theorem \ref{thm-MxN_X_JLrelay}} requires matrix inversion operation of matrices
with dimension $K^2\times K^2$.
$\hfill{} \blacksquare$\end{remark}

\begin{remark}
For the general $M\times N$ X channel with $L$-antenna relays, we can also design a
transmission scheme that first uses partial interference alignment to
align $L$ interfering signals, and then uses joint
beamforming to align the rest of the interfering signals. The scheme
can be designed using similar ideas as in the proof of {\it
  Corollary \ref{prop-3x3-X-2antRelay}} and {\it
  Corollary \ref{theorem:k-user-x-multiant-relay}}, and thus is omitted
here.
$\hfill{} \blacksquare$\end{remark}

\subsection{$K$-user X channel with $J$ Single Antenna Relays}
We now consider the $K$-user X channel with
multiple single-antenna relays. From {\it Theorem
  \ref{thm-MxN_X_JLrelay}}, the condition to achieve the same optimal
DoF as the case when CSIT is available is summarized in the following corollary.

\begin{corollary}\label{theorem-K-X-single}
For the $K$-user X channel with single antenna relays, when there is
no CSIT but global CSI is available at the relays, a sufficient condition to achieve the optimal DoF
$\frac{K^2}{2K-1}$ is to have $(K-1)^2$ relays.
\end{corollary}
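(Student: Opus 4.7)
The plan is to obtain this as a direct specialization of Theorem \ref{thm-MxN_X_JLrelay} by plugging in $M=N=K$ and $L=1$. With these substitutions, the sufficient condition $J\ge \lceil (M-1)(N-1)/L^2 \rceil$ becomes $J\ge (K-1)^2$, while the achieved DoF $MN/(M+N-1)$ equals $K^2/(2K-1)$, which matches the upper bound from Proposition \ref{prop-upperbound-forX}. Hence there is nothing to prove beyond tracking how the general construction degenerates in the single-antenna-relay case.

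First I would rewrite the transmission scheme for $L=1$: each precoding matrix $\mathbf{U}_{i\gamma}(t')$ shrinks to a scalar $u_{i\gamma}(t')$, and the relay transmit signal $\mathbf{X}_{R_i}(t')\in\mathbb{C}^L$ shrinks to a scalar. The alignment condition \eqref{eq:9} then becomes a linear system in the vector $\mathbf{u}(\gamma,t')=[u_{i\gamma}(t')]^{(i)}\in\mathbb{C}^J$. For each fixed $(\gamma,t')$ with $\gamma\in\{1,\dots,K\}$ and $t'\in\{K+1,\dots,2K-1\}$, there is one equation for every $(n,k)$ with $n\ne\gamma$ and $k\ne 1$, producing $(K-1)^2$ equations. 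Because each relay now contributes $L^2=1$ unknown, we require $J\ge (K-1)^2$ variables so that $\mathbf{H}(\gamma,t')\mathbf{u}(\gamma,t')=\mathbf{b}(\gamma,t')$ is solvable; since the channel coefficients are continuously distributed, $\mathbf{H}(\gamma,t')$ is full rank almost surely, and the solution is obtained by inversion when $J=(K-1)^2$ or by the pseudoinverse when $J>(K-1)^2$.

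Next I would argue that the remaining elements of the proof of Theorem \ref{thm-MxN_X_JLrelay} transfer verbatim. Joint beamforming by transmitter $1$ in slots $K+1,\dots,2K-1$ is still necessary: without it, $\mathbf{b}(\gamma,t')$ vanishes and so do all $u_{i\gamma}(t')$, collapsing the intended-signal subspace (the same phenomenon pointed out in Remark \ref{remark:MN_X_Joint_Beamforming}, applied to the $JL^2=(M-1)(N-1)$ boundary). The block-sparsity structure of the received-signal vector in equation \eqref{eq:5} again guarantees that the intended data streams at each receiver occupy a $K$-dimensional subspace linearly independent of the $(K-1)$-dimensional aligned-interference subspace, so zero-forcing attains $K^2$ symbols over $2K-1$ slots. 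There is no substantive obstacle: the only thing to watch is that $L=1$ sits at the equality $JL^2=(M-1)(N-1)$ boundary case, which is exactly the regime where joint beamforming at the transmitter is indispensable, and this is already handled by the general proof.
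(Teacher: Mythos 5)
Your proposal is correct and matches the paper exactly: the paper states this corollary with no separate proof, presenting it as the immediate specialization of Theorem \ref{thm-MxN_X_JLrelay} with $M=N=K$ and $L=1$, which is precisely your argument. Your additional bookkeeping (scalar precoders, $(K-1)^2$ equations per $(\gamma,t')$, and the observation that $L=1$ with $J=(K-1)^2$ sits on the $JL^2=(M-1)(N-1)$ boundary where joint beamforming is indispensable) is consistent with the general proof and with Remark \ref{remark:MN_X_Joint_Beamforming}.
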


\begin{remark}
{\it Corollary \ref{theorem-K-X-single}} showed that if there are not
enough number of antennas at the relays, we can use more relays to
compensate the lack of spatial dimensions. If we consider the total
number of antennas at all the relays, we can see that the lack
  of spatial dimensions at the relays increases the total number of
  antennas needed to achieve the optimal DoF from $K-1$ to $(K-1)^2$.
$\hfill{} \blacksquare$\end{remark}

\begin{remark}
For $K$-user X channel with single antenna relays, the number of
relays required to achieve the DoF upper bound is
$\mathcal{O}(K^2)$. This clearly places more overhead for the relays to
obtain the global CSI as compared
to obtaining global CSI at the $K$ transmitters. If we want to keep a
comparable overhead and employ $K$ relays only, we can only allow
$\lfloor\sqrt{K}+1 \rfloor$ users to transmit, which yields a DoF to the order of
$\mathcal{O}(\sqrt{K})$. 
$\hfill{} \blacksquare$\end{remark}

We have seen that for the X channel without CSIT, relaying can provide
DoF gain to achieve the optimal DoF. It is trivial to see that the
same is true for the setting when the transmitters have delayed CSIT,
since one can always ignore the delayed CSIT and employ the same
scheme. We next consider the case where the relays have delayed CSI.

\subsection{Full CSI vs Delayed CSI at the Relay}
In this section, we
investigate the DoF of the $K$-user X channel without CSIT with one
$(K-1)$-antenna relay under the
assumption that the relay has delayed CSI. We first
consider the $K$-user X channel with one $K$-antenna relay, which clearly
provides a DoF upperbound to the case with a $(K-1)$-antenna relay.

\begin{theorem}
For the $K$-user X channel with a $K$-antenna relay, when there is no
CSIT and only delayed CSI is available at the relay,
the DoF is given by
\begin{equation}
\frac{K}{1+\frac{1}{2}+\cdots+\frac{1}{K}}\label{eq:4}
\end{equation}
\end{theorem}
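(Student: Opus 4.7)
My plan is to establish the stated DoF by matching an achievability scheme against a cooperation-based converse, both anchored in the Maddah-Ali--Tse (MAT) retrospective interference alignment result for the $K$-user MIMO broadcast channel with delayed CSIT, whose DoF is $K/(1+\tfrac12+\cdots+\tfrac1K)$ whenever the transmitter has at least $K$ antennas.

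For the achievable scheme I would adapt MAT by letting the $K$ blind transmitters collectively realise phase~1 of the MAT broadcast scheme, and letting the $K$-antenna relay take over as the effective MAT transmitter from phase~2 onward. In phase~1, comprising $K$ slots, in slot $t$ each transmitter $m$ sends the fresh data symbol carrying $w_{tm}$. Because the relay has $K$ antennas and the channels are drawn from a continuous distribution, the $K\times K$ transmitter-to-relay observation matrix in every slot is almost surely invertible, so the relay decodes all $K^{2}$ messages by the end of slot $K$. Simultaneously each receiver $n$ has collected $K$ linear equations: one ``desired'' equation in slot $t=n$ mixing its own $K$ messages $\{w_{nm}\}_m$, and $K-1$ ``overheard'' equations, each a generic projection of a different receiver's $K$ messages. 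This is structurally identical to the output of phase~1 of MAT at a co-located $K$-antenna BC transmitter. From slot $K+1$ onward only the relay transmits; holding all messages and global delayed CSI, including the direct-link coefficients that determined every overheard equation, it executes the usual MAT recursion of order-$j$ signals for $j=2,\dots,K$, each signal crafted to be simultaneously useful to a chosen $j$-subset of receivers by cancelling side information already available at those receivers. The recursion consumes $K(\tfrac12+\tfrac13+\cdots+\tfrac1K)$ additional slots, so $K^{2}$ messages are delivered in $K(1+\tfrac12+\cdots+\tfrac1K)$ slots, yielding the claimed DoF.

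For the converse I would apply two cooperation-style relaxations, each of which can only enlarge the DoF region. First, grant every transmitter the same delayed global CSI that the relay has; this is a strict relaxation since the baseline model has no CSIT. Second, let the $K$ single-antenna transmitters and the half-duplex relay pool all messages and all CSI and drop the half-duplex constraint, merging them into a single virtual $2K$-antenna transmitter that knows every message $w_{nm}$. The resulting model is a MIMO broadcast channel with a $2K$-antenna source, $K$ single-antenna receivers, $K$ messages per receiver, and delayed CSIT, for which the Vaze--Varanasi/MAT converse gives DoF $K/(1+\tfrac12+\cdots+\tfrac1K)$---additional transmit antennas beyond $K$ are known not to raise the DoF in this regime. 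This matches the achievable value and completes the characterization.

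The most delicate step I expect is the faithful emulation of MAT phase~1 by the blind, distributed transmitters: I will need to verify that the $K-1$ overheard equations at each receiver are in generic position in exactly the sense required by MAT, so that the higher-order combinations constructed at the relay in rounds $2,\dots,K$ still cleanly separate each receiver's desired subspace from interference after all rounds. Once that structural correspondence is established, the remainder of the achievability argument is a transcription of the standard MAT construction with the relay substituted for the source, and the converse reduces to a direct appeal to the classical delayed-CSIT MIMO BC bound.
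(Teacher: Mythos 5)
Your proposal is correct and follows essentially the same route as the paper: achievability by having the blind transmitters realise phase~1 of the Maddah-Ali--Tse scheme, letting the $K$-antenna relay decode all $K^2$ messages and then act as the MAT broadcast transmitter for the remaining phases, and a converse obtained by merging the transmitters and relay into a $2K$-antenna broadcast transmitter with delayed CSIT and invoking the known delayed-CSIT BC outer bounds. Your write-up is somewhat more explicit than the paper's (e.g., the slot count and the genericity caveat for the overheard equations), but the argument is the same.
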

\begin{proof}
The achievability of this DoF can be obtained using a similar strategy
as in \cite{StaleCSIT}. The scheme in \cite{StaleCSIT} is designed for the
$K$-user broadcast channel and consists of $K$ phases, where in phase $1$, the transmitter sends the messages to the
receivers. In slot $t=1,\cdots,K$ for phase $1$, the transmitter
sends ${\bf X}(t)=\left(d_{t1},d_{t2},\cdots,d_{tK}\right)^T$, where $d_{ti}$ is the $i$th message intended for
receiver $t$. The transmission scheme used for this phase can be
implemented for the $K$-user X channel. Since the relay has $K$ antennas and
delayed CSI, it can decode all the messages, and then it can
act as the transmitter in the broadcast channel to implement the transmission
scheme for the rest of the phases to achieve the DoF specified by (\ref{eq:4}).

To upper bound the DoF of the channel, we combine all the transmitters
and the relay, which yields a
broadcast channel with $2K$ antennas at the transmitter with
delayed CSIT. The outerbounds in references
\cite{StaleCSIT} and \cite{VazeBCdelayed} can then be used to obtain
equation (\ref{eq:4}).
\end{proof}

Recall that for the $K$-user X channel without CSIT, when the relay has global CSI,
we can achieve the optimal DoF $\frac{K^2}{2K-1}$ with only $K-1$
antennas at the relay. For the case with delayed CSI at the relay, when the relay
has $K-1$ antennas, the DoF at most equals equation \eqref{eq:4}.
It is clear that for the
$K$-user X channel without CSIT, global CSI at the relay can
provide a DoF gain, compared to the case when only delayed
CSI is available at the relay.

\section{Relay-aided Interference Alignment for $K$-user Interference
  Channel}\label{sec:relay-aided-interf-IC}
In this section, we investigate impact of relays on the DoF of the
$K$-user interference channel without CSIT, letting the relays utilize the time/frequency/spatial
dimensions available to steer the signals into the desired
directions. The goal is once again to recover the optimal DoF with CSIT. Relays are assumed to have
global CSI. Following similar arguments as in {\it Proposition
  \ref{prop-upperbound-forX}}, we first propose a DoF upper bound for
this channel.

\begin{proposition}
The DoF for the $K$-user interference channel without CSIT but with
the presence of relays with global CSI is upper bounded by $\frac{K}{2}$.
\end{proposition}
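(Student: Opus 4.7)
The plan is to mirror the argument used for \textit{Proposition \ref{prop-upperbound-forX}} in the X channel case, since the logical structure is identical and the same two supporting results from the literature apply. The $K$-user interference channel without CSIT with relays is trivially upper bounded by the $K$-user interference channel with CSIT and relays, because additional knowledge at the transmitters can only enlarge the achievable rate region. So the first step is to pass to the model in which all nodes (transmitters, receivers, and relays) have global CSI.

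Next, I would invoke the result of \cite{CadambeKuser}, which establishes that the $K$-user interference channel with single-antenna nodes, time-varying coefficients, and global CSI at all nodes has optimal DoF $K/2$, and combine it with the result of \cite{CadambeALLDOF}, which shows that introducing relays, even with arbitrarily many antennas and in arbitrary numbers, does not increase the DoF of fully connected interference channels when all nodes have global CSI. Concatenating these two facts gives that the $K$-user interference channel with CSIT and with relays having global CSI has optimal DoF exactly $K/2$. Since this is the upper-bounding model from the first step, the DoF of the no-CSIT-with-relays model is also upper bounded by $K/2$.

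There is essentially no obstacle here beyond checking that the statement of \cite{CadambeALLDOF} is stated for an arbitrary number of relays with an arbitrary number of antennas, so that the upper bound is independent of $J$ and $L$, matching the formulation of the proposition. I would also append a short remark, in parallel with the remark following \textit{Proposition \ref{prop-upperbound-forX}}, noting that the outer bounds in \cite{CadambeKuser,CadambeALLDOF} do not require the channel to be time varying, so the bound $K/2$ holds in both the time-varying and the constant-channel settings. This makes the upper bound tight in view of the achievability results developed later in Section \ref{sec:relay-aided-interf-IC}.
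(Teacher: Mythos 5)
Your proposal is correct and follows essentially the same route as the paper: upper-bound the no-CSIT model by the full-CSIT model, invoke \cite{CadambeKuser} for the $\frac{K}{2}$ DoF of the $K$-user interference channel with global CSI, and invoke \cite{CadambeALLDOF} to conclude that relays cannot increase this DoF, so the bound carries over regardless of the number of relays or antennas. The additional remark about the bound holding for constant channels is a nice touch consistent with the paper's treatment of the X channel case.
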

\begin{proof}
The DoF for the $K$-user interference channel without CSIT with relays
can be upper bounded by the $K$-user interference channel with CSIT
and relays. Since relaying does not provide any DoF
gain for interference channel with global CSI at all nodes
\cite{CadambeALLDOF}, the optimal DoF for $K$-user interference channel with full
CSIT, which is shown in reference \cite{CadambeKuser} to be
$\frac{K}{2}$, can be an upper bound for the $K$-user interference channel without CSIT with relays.
\end{proof}

\subsection{$J$ relays with $L$ antennas}
We first consider the most general case for the $K$-user interference channel, where we have $J$
relays each equipped with $L$ antennas.
\begin{theorem}\label{thm_KIC_JLR}
For the relayd- aided $K$-user interference channel without CSIT, when there is global
CSI at the relays, the optimal DoF
$\frac{K}{2}$ can be achieved using
$\left\lceil\frac{K(K-2)}{L^2}\right\rceil$ relays with $L$ antennas.
\end{theorem}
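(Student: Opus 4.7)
The plan is to mirror the structure of the proof of {\it Theorem \ref{thm-MxN_X_JLrelay}} but specialized to two slots, since the $K$-user interference channel needs to deliver $K$ messages in $2$ slots to reach DoF $K/2$. In slot $t=1$ every transmitter sends its own stream, $X_k(1)=d_k$, and the relays only listen; relay $R_i$ stores $\mathbf{Y}_{R_i}(1)=\sum_k \mathbf{h}_{R_ik}(1)d_k$. In slot $t=2$ each transmitter retransmits its stream, $X_k(2)=d_k$ (providing the joint-beamforming degrees of freedom), while relay $R_i$ forwards $\mathbf{X}_{R_i}(2)=\mathbf{U}_i\mathbf{Y}_{R_i}(1)$ for an $L\times L$ precoder $\mathbf{U}_i$ to be designed.

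Stacking the two received samples at receiver $n$ yields a $2$-dimensional vector $\mathbf{Y}_n=\sum_k \mathbf{v}_{nk}\,d_k$, where the first entry of $\mathbf{v}_{nk}$ is $h_{nk}(1)$ and the second is $h_{nk}(2)+\sum_i \mathbf{h}_{nR_i}(2)^T\mathbf{U}_i\mathbf{h}_{R_ik}(1)$. The alignment goal is to force the $K-1$ interference columns $\{\mathbf{v}_{nk}\}_{k\neq n}$ to lie in a single one-dimensional subspace of $\mathbb{C}^2$; collinearity of $K-1$ vectors in $\mathbb{C}^2$ is $K-2$ complex scalar conditions per receiver, so in total $K(K-2)$ linear equations in the entries of the precoders. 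Each $\mathbf{U}_i$ contributes $L^2$ unknowns, giving $JL^2$ variables, which is exactly why $J\ge\lceil K(K-2)/L^2\rceil$ is the natural counting threshold.

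Next I would recast the system in the form $\mathbf{H}\mathbf{u}=\mathbf{b}$ exactly as in equation \eqref{eq:20}, observe that the entries of $\mathbf{H}$ are polynomials in independent continuous channel coefficients and therefore $\mathbf{H}$ has full row rank almost surely, and solve directly when the system is square or via the pseudo-inverse $\mathbf{H}^\dagger(\mathbf{H}\mathbf{H}^\dagger)^{-1}\mathbf{b}$ otherwise. As in {\it Theorem \ref{thm-MxN_X_JLrelay}}, only global CSI at the relays is used and no inter-relay cooperation is required.

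The step I expect to be the most delicate is the separability check: after alignment I must verify that the desired column $\mathbf{v}_{nn}$ is not itself collinear with the $1$-dimensional interference subspace at receiver $n$, since otherwise zero-forcing cannot isolate $d_n$. Here I would invoke the same two ingredients that underpin the X-channel argument. First, joint beamforming (cf. {\it Remark \ref{remark:MN_X_Joint_Beamforming}}) keeps $\mathbf{b}\neq \mathbf{0}$, so the solution $\mathbf{u}$ is nonzero and the second coordinate of $\mathbf{v}_{nn}$ carries a genuine contribution from $h_{nn}(2)$; without joint beamforming the interferers would align trivially but the desired stream would not span a new dimension. Second, the time-varying channel assumption (cf. {\it Remark \ref{remark-MN_X_JL_timevarying}}) guarantees that $h_{nn}(1)$ and $h_{nn}(2)$ are drawn independently from a continuous distribution, so $\mathbf{v}_{nn}$ escapes the interference direction almost surely. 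A zero-forcing receiver then recovers $d_n$ at each $n$, delivering $K$ desired messages in $2$ slots and achieving DoF $K/2$, matching the upper bound established above.
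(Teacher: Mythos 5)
Your proposal is correct and follows essentially the same route as the paper's proof: the identical two-slot scheme with joint beamforming in slot 2, the same count of $K(K-2)$ collinearity conditions against $JL^2$ precoder entries, the same linear system $\mathbf{Hu}=\mathbf{b}$ solved via inversion or the pseudo-inverse, and the same almost-sure separability of the desired stream. The only refinement the paper makes explicit that you leave implicit is that the alignment equations (and hence the resulting $\mathbf{U}_j$) do not involve the direct-link coefficients $h_{kk}(1),h_{kk}(2)$, which is what makes the "escapes the interference direction almost surely" conclusion rigorous.
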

\begin{proof}
To show the achievability of DoF $\frac{K}{2}$, we
construct a 2-slot transmission scheme.

In the first slot, each transmitter sends a message to the intended
receiver, i.e.,
\begin{equation}
X_k(1)=d_k,
\end{equation} where $d_k$ denotes the data stream carrying the
message $w_k$, and $k=1,\cdots,K$.

The signals received at receiver $k$ and relay $R_j$ are
\begin{equation}
Y_k(1)=\sum_{i=1}^Kh_{ki}(1)d_i,
\end{equation}
\begin{equation}
{\bf Y}_{R_j}(1)=\sum_{i=1}^K{\bf h}_{R_ji}(1)d_i,
\end{equation} where ${\bf Y}_{R_j}(1), {\bf h}_{R_ji}(1)\in
\mathbb{C}^{L}$. 

Since we use a 2-slot transmission scheme, the signal space at the
receivers has 2 dimensions in time. To decode the intended message,
the receivers need to keep all the other
$K-1$ interference signals aligned in a one dimensional space. To
this end, relay $R_j$ applies a
precoding matrix to the received signal vector, and transmits the following signal vector in the second slot:
\begin{equation}
{\bf X}_{R_j}(2)={\bf U}_j{\bf Y}_{R_j}(1)
\end{equation} where ${\bf U}_j \in \mathbb{C}^{L\times L}$,
which is to be determined later. In the second slot, we also let the receiver perform
joint beamforming to transmit
\begin{equation}
X_k(2)=d_k.
\end{equation}

The received signal at receiver $k$ for slot 2 can be expressed as
\begin{align}
Y_k(2)&=\sum_{i=1}^Kh_{ki}(2)d_i+\sum_{j=1}^J{\bf h}_{kR_j}^T(2){\bf
  x}_{R_j}\\
&=\sum_{i=1}^Kh_{ki}(2)d_i+\sum_{j=1}^J\sum_{i=1}^K{\bf h}_{kR_j}^T(2){\bf
  U}_{j}{\bf h}_{R_ji}(1)d_i.
\end{align}

Grouping the received signals at receiver $k$ from 2 slots into
vector form, we have
\begin{align}
&{\bf Y}_k=\left(
  \begin{array}{cc}
    h_{kk}(1) \\
    h_{kk}(2)+\sum_{j=1}^J{\bf h}_{kR_j}^T(2){\bf
  U}_{j}{\bf h}_{R_jk}(1) \\
  \end{array}
  \right)d_k\nonumber\\
&+\sum_{i\neq k}\left(
  \begin{array}{cc}
    h_{ki}(1) \\
    h_{ki}(2)+\sum_{j=1}^J{\bf h}_{kR_j}^T(2){\bf
  U}_{j}{\bf h}_{R_ji}(1) \\
  \end{array}
  \right)d_i\label{eq:7}.
\end{align}

In order to align all the interference signals into a one dimensional space, we need
\begin{align}
&\frac{h_{ki}(2)+\sum_{j=1}^J{\bf h}_{kR_j}^T(2){\bf
  U}_{j}{\bf h}_{R_ji}(1)}{h_{ki}(1)}\nonumber\\
&=\frac{h_{kl}(2)+\sum_{j=1}^J{\bf h}_{kR_j}^T(2){\bf
  U}_{j}{\bf h}_{R_jl}(1)}{h_{kl}(1)}\label{eq:3}
\end{align}
where $i=2$ if $k=1$, and $i=1$ if $k\neq 1$, for all $l
\neq k, l\neq i$. Equation \eqref{eq:3} can be equivalently written as 
\begin{align}
&\sum_{j=1}^J{\bf h}_{kR_j}^T(2){\bf
  U}_{j}\left(\frac{{\bf h}_{R_ji}(1)}{h_{ki}(1)}-\frac{{\bf
    h}_{R_jl}(1)}{h_{kl}(1)}\right)\nonumber\\
&=\left(\frac{h_{kl}(2)}{h_{kl}(1)}-\frac{h_{ki}(2)}{h_{ki}(1)}\right).\label{eq:17}.
\end{align}

If we denote the entries of ${\bf U}_j$ as $u_{j,mn}$, where
$m,n=1,\cdots,L$, entries of ${\bf h}_{kR_j}(2)$ as
${h}_{kR_j,m}(2)$, and entries of ${\bf h}_{R_ji}(1)$ as
${h}_{R_ji,n}(1)$, then equation \eqref{eq:17} can be written as 
\begin{align}
&\sum_{j=1}^J\sum_{m=1}^L\sum_{n=1}^L{h}_{kR_j,m}(2)\left(\frac{{h}_{R_ji,n}(1)}{h_{ki}(1)}-\frac{{
    h}_{R_jl,n}(1)}{h_{kl}(1)}\right){
  u}_{j,mn}\nonumber\\
&=\left(\frac{h_{kl}(2)}{h_{kl}(1)}-\frac{h_{ki}(2)}{h_{ki}(1)}\right).
\end{align}

If we let
\begin{equation}
{\bf h}_{kl}=\left[{h}_{kR_j,m}(2)\left(\frac{{h}_{R_ji,n}(1)}{h_{ki}(1)}-\frac{{
    h}_{R_jl,n}(1)}{h_{kl}(1)}\right)\right]^{(jmn)},
\end{equation}
\begin{equation}
{\bf
  b}=\left[\frac{h_{kl}(2)}{h_{kl}(1)}-\frac{h_{ki}(2)}{h_{ki}(1)}\right]^{(kl)} \label{eq:18}
\end{equation}
 and reorganize $u_{j,mn}$ to form a vector
\begin{equation}
{\bf u}=\left[u_{j,mn}\right]^{(jmn)},
\end{equation} then all the linear equations can be written as
\begin{equation}
{\bf Hu}={\bf b},\label{eq:19}
\end{equation} where ${\bf H}$ is obtained by using ${\bf h}_{kl}^T$ 
as its rows for all the enumeration of $k$ and $l$, corresponding to
the order of indices $k$ and $l$ in ${\bf b}$. 

The matrix ${\bf H}$ has dimension $K(K-2)\times JL^2$, and it is full
rank almost surely since the entries of channel matrices are drawn
from a continuous distribution. In order to guarantee that the
interference is aligned, we need to have $JL^2\ge K(K-2)$ such that
we can find precoding matrices ${
\bf U}_j$ at the relays. When $JL^2\ge K(K-2)$, matrices ${\bf U}_j$
can be obtained from the null space of matrix ${\bf H}$ or inverting
the matrix ${\bf H}$. 

Now we need to show that the interference and the signal carrying
intended messages are linearly independent. We first observe that when
$JL^2\ge K(K-2)$, ${\bf u}={\bf H}^\dagger({\bf H}{\bf H}^\dagger)^{-1}{\bf
  b}$ is always a solution. The
matrices ${\bf U}_j$ are thus only linear functions of the channel
coefficients except for $h_{kk}(1)$ and $h_{kk}(2)$. From equation
\eqref{eq:7}, since interference is aligned, we have 
\begin{equation}
\lambda h_{ki}(1)=h_{ki}(2)+\sum_{j=1}^J{\bf h}_{kR_j}^T(2){\bf
  U}_{j}{\bf h}_{R_ji}(1)
\end{equation} for some $\lambda$. If the signal carrying intended
messages and the interference are also aligned, we must have 
\begin{equation}
\lambda h_{kk}(1)=h_{kk}(2)+\sum_{j=1}^J{\bf h}_{kR_j}^T(2){\bf
  U}_{j}{\bf h}_{R_jk}(1).
\end{equation} Since $\sum_{j=1}^J{\bf h}_{kR_j}^T(2){\bf
  U}_{j}{\bf h}_{R_jk}(1)$ is a linear function of channel
coefficients except for $h_{kk}(1)$ and $h_{kk}(2)$, the probability
that the signal carrying intended messages and the interference are
also aligned is zero, since the channel matrices are
generated from a continuous distribution. Therefore the receivers can
decode the intended messages using zero-forcing, and the DoF $K \over 2$
can be achieved almost surely.
\end{proof}
\begin{remark}\label{remark_IC_jointBF_channel}
In the above scheme, when we have $JL^2=K(K-2)$, the matrix ${\bf H}$
in equation \eqref{eq:19} is invertible. For this case, we must use
joint beamforming and the channel need to be time varying in order to
obtain non-zero precoding matrices ${\bf U}_j$ at the relays. This is
because when we do not use joint beamforming or the channel being
constant, the vector ${\bf b}$ on right hand side of equation
\eqref{eq:19} becomes zero, which results in all-zero precoding
matrices at the relay. This reduces the
available dimensions of the signal space at the receivers to
one, similar as the observation we have for the X channel in {\it
  Remark \ref{remark:MN_X_Joint_Beamforming}}. For this case, the intended signal and the interfering signals
are aligned together. By remaining silent, the relays are still
able to keep all the interference aligned. However, we need another dimension in the signal
space to separate the intended signal from the interference. Joint
beamforming and time varying channel, for this
case, allow the relays to facilitate interference alignment without
reducing the dimensions of the signal spaces at the receivers.
On the other hand, when we have $JL^2>K(K-2)$,
we can always find a non-zero vector ${\bf u}$ from the null space of
${\bf H}$, and thus the channel does not need to be time varying and
we do not have to use joint beamforming.
$\hfill{} \blacksquare$\end{remark}

\begin{remark}
In reference \cite{CadambeKuser}, the DoF $\frac{K}{2}$ for the $K$-user interference channel is achieved via channel
extension, which requires infinite channel uses to achieve exactly
$\frac{K}{2}$ degrees of freedom. In our scheme, however, the DoF is achieved via a two-slot transmission scheme.
$\hfill{} \blacksquare$\end{remark}

\begin{remark}
If we
assume that the channel coefficients are drawn from the Rayleigh
distribution, then it is shown in \cite{VazeNoCSIT} that the DoF for
the $K$-user interference channel without CSIT is upper bounded by
1. It is thus clear that relays can provide DoF gain for the $K$-user
interference channel without CSIT.
$\hfill{} \blacksquare$\end{remark}

\begin{remark}
The scheme we used for {\it Theorem \ref{thm_KIC_JLR}} can also be
applied to the case when the transmitters and the receivers have
multiple antennas.
$\hfill{} \blacksquare$\end{remark}

We next consider two special cases of the channel, namely the case
when there is a single relay with multiple antennas and the case when
there are multiple relays with a single antenna. 

\subsection{Single relay with multiple antennas} For this case, it
is easy to see that when a relay has $K$ antennas, the DoF upper bound
$\frac{K}{2}$ can be achieved using a 2-slot transmission scheme:
In the first slot, the transmitters send messages to the relay, and relay
decodes all messages. In the second slot, the relay broadcasts all the
messages to the receivers. The $K$ antennas at the relay can
provide sufficient spatial dimensions for the relay to decode and
broadcast the messages. However, from {\it Theorem \ref{thm_KIC_JLR}},
a sufficient condition to achieve the DoF $\frac{K}{2}$ is to have a
relay with $K-1$ antennas, which is summarized in the following corollary:

\begin{corollary}\label{theorem_IC_relay_K-1}
For the relay-aided $K$-user interference channel without CSIT, a
sufficient condition to achieve the optimal DoF $\frac{K}{2}$ is to
have $K-1$ antennas at the relay.
\end{corollary}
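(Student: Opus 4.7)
The plan is to deduce this corollary as an immediate specialization of Theorem \ref{thm_KIC_JLR} to the single-relay regime. Setting $J=1$ and $L=K-1$ in the sufficient condition $J \ge \left\lceil \frac{K(K-2)}{L^2} \right\rceil$, I need only verify the arithmetic inequality $(K-1)^2 \ge K(K-2)$. Since $(K-1)^2 = K^2-2K+1 = K(K-2)+1$, the inequality holds (strictly), which is equivalent to $\left\lceil \frac{K(K-2)}{(K-1)^2} \right\rceil = 1$. Thus the hypothesis of Theorem \ref{thm_KIC_JLR} is satisfied by a single $(K-1)$-antenna relay.

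Once the hypothesis is verified, I would invoke the achievability scheme of Theorem \ref{thm_KIC_JLR} verbatim: in the first slot each transmitter sends $X_k(1)=d_k$ while the relay listens, and in the second slot the relay transmits $\mathbf{U}_1 \mathbf{Y}_{R_1}(1)$ while the sources again send $X_k(2)=d_k$. The $K(K-2)$ interference-alignment equations collected in the linear system $\mathbf{H}\mathbf{u} = \mathbf{b}$ of equation \eqref{eq:19} now contain $(K-1)^2$ unknowns coming from the single precoder $\mathbf{U}_1$. Because $\mathbf{H}$ is almost surely of full row rank and the number of variables strictly exceeds the number of constraints, a precoding matrix aligning every $(K-1)$-tuple of interfering streams into a one-dimensional subspace at each receiver exists almost surely. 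The linear independence of the desired direction from the aligned interference direction at each receiver follows, as in the proof of Theorem \ref{thm_KIC_JLR}, from the fact that the entries of $\mathbf{U}_1$ depend only on channel coefficients distinct from $h_{kk}(1)$ and $h_{kk}(2)$, so the continuous-distribution argument transfers unchanged and zero-forcing recovers all $K$ streams.

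There is no substantial obstacle here; the only real step is the arithmetic check $(K-1)^2 \ge K(K-2)$, and the rest is inherited from Theorem \ref{thm_KIC_JLR}. The one subtlety worth mentioning is that we land strictly inside the regime $JL^2 > K(K-2)$, so Remark \ref{remark_IC_jointBF_channel} additionally applies: $\mathbf{u}$ can be chosen from the null space of $\mathbf{H}$ without joint beamforming, and time variation of the channel is not needed. In other words, the corollary not only follows from the theorem but is in fact the ``easy'' boundary case where the stronger no-joint-beamforming statement is automatic.
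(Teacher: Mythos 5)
Your proposal is correct and follows exactly the paper's route: the paper likewise obtains this corollary as the special case $J=1$, $L=K-1$ of Theorem~\ref{thm_KIC_JLR}, relying on $(K-1)^2 = K(K-2)+1 > K(K-2)$. Your closing observation that this lands in the $JL^2 > K(K-2)$ regime, so that neither joint beamforming nor a time-varying channel is needed, is also precisely the point the paper makes via Remark~\ref{remark_IC_jointBF_channel} immediately after the corollary.
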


This result can be obtained as a special case from {\it Theorem
  \ref{thm_KIC_JLR}}. Note that this result was also obtained in \cite{Chen_Globecom} using
similar ideas. In fact, for this case, it is shown in
\cite{Chen_Globecom} that the $K-1$ antennas at the relay
is also a necessary condition to achieve the optimal DoF using linear
precoding schemes at the relay. 

From {\it Remark
  \ref{remark_IC_jointBF_channel}}, we observe two important features
for the case with a single relay equipped with $K-1$ antennas: The
channel does not need to be time
varying and there is no need for joint beamforming between
transmitters and the relay for the transmission in the second
slot.

\subsection{Multiple relays with single antenna}
We now focus on the case when relays only have a single antenna, and
investigate how many relays are needed to achieve the DoF 
$\frac{K}{2}$. From {\it Theorem
  \ref{thm_KIC_JLR}}, we have the following corollary.

\begin{corollary}\label{sec:relay-aided-interf}
For relay-aided $K$-user interference channel without CSIT, using the presence
of single antenna relays with global CSI, a sufficient
condition to achieve the optimal DoF $\frac{K}{2}$ is to have $K(K-2)$ relays.
\end{corollary}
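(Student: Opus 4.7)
The plan is to obtain Corollary 5 as an immediate specialization of Theorem 2 by setting the per-relay antenna count to $L=1$. Substituting into the sufficient condition $J \ge \lceil K(K-2)/L^2 \rceil$ yields $J \ge K(K-2)$, so the task reduces to checking that every step of the proof of Theorem 2 continues to make sense when each relay carries a single antenna. I would state this specialization explicitly and then walk through the degenerate pieces of the construction to confirm no hidden issues appear.

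Concretely, I would revisit the two-slot scheme: in slot 1 each transmitter sends $X_k(1)=d_k$; each single-antenna relay $R_j$ now receives a scalar $Y_{R_j}(1)=\sum_i h_{R_j i}(1)\, d_i$; in slot 2 the matrix ${\bf U}_j\in\mathbb{C}^{L\times L}$ of Theorem 2 collapses to a scalar $u_j$, so the relay transmits $X_{R_j}(2)=u_j\,Y_{R_j}(1)$, while the transmitters repeat $X_k(2)=d_k$ for joint beamforming. The alignment conditions in equation \eqref{eq:3} then reduce to a system of $K(K-2)$ scalar equations in the $J$ unknowns $u_1,\dots,u_J$, reproducing ${\bf H}{\bf u}={\bf b}$ from equation \eqref{eq:19} with ${\bf H}\in\mathbb{C}^{K(K-2)\times J}$. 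Setting $J=K(K-2)$ makes ${\bf H}$ square, and almost-sure invertibility follows from the same genericity argument used in the proof of Theorem 2, since the entries of ${\bf H}$ are rational functions in channel coefficients drawn from a continuous distribution. Solving ${\bf u}={\bf H}^{-1}{\bf b}$ yields valid scalar precoders at each single-antenna relay, and the linear independence of the intended-signal direction from the aligned-interference direction is inherited verbatim: $\sum_j h_{kR_j}(2)\,u_j\,h_{R_j k}(1)$ is a linear function of channel gains other than $h_{kk}(1)$ and $h_{kk}(2)$, so with probability one it does not collapse the desired direction onto the interference.

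The main point of caution, rather than a genuine obstacle, is that $J L^2 = K(K-2)$ places us exactly at the boundary case described in Remark \ref{remark_IC_jointBF_channel}. At this boundary ${\bf H}$ is invertible, so if the transmitters stayed silent in slot 2 or the channel were constant, the right-hand side ${\bf b}$ would vanish, forcing ${\bf u}={\bf 0}$ and leaving the receivers with only a one-dimensional signal space. I would therefore close by noting that the scheme above already keeps $X_k(2)=d_k$ active and that time-variation is inherited from the standing assumption on the channel, so the specialization is consistent and the desired DoF $K/2$ is achieved with $K(K-2)$ single-antenna relays.
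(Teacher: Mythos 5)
Your proposal is correct and follows essentially the same route as the paper, which obtains this corollary directly as the $L=1$ specialization of the general theorem for $J$ relays with $L$ antennas, giving $J \ge K(K-2)$. Your additional check of the boundary case — that the square, almost-surely invertible system forces joint beamforming in slot 2 and a time-varying channel to keep ${\bf b}$ nonzero — matches exactly the paper's accompanying remark on why these two conditions are essential in the single-antenna case.
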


Different from {\it Corollary \ref{theorem_IC_relay_K-1}}, for the case
when we have $K(K-2)$ relays with single antenna, joint beamforming between the
transmitters and the relays and the channel being time varying are two
important conditions to achieve the optimal DoF, as observed from {\it
Remark \ref{remark_IC_jointBF_channel}}.

\begin{remark}
The above scheme requires the number of single-antenna relays to be of the order
$\mathcal{O}(K^2)$, to achieve the optimal DoF for
the $K$-user interference channel with relays. It is then interesting to see how
much DoF we can achieve if the number of relays is of
the order $\mathcal{O}(K)$. For this case, we can consider a subset of
$\left\lfloor\sqrt{K}\right\rfloor$ transmitter-receiver pairs as a
$\left\lfloor\sqrt{K}\right\rfloor$-user interference channel. The achievable DoF is then
$\frac{\left\lfloor\sqrt{K}\right\rfloor}{2}$, which is still a significant
improvement compared to the DoF of the $K$-user interference channel with no relays under Rayleigh fading \cite{VazeNoCSIT}.
$\hfill{} \blacksquare$\end{remark}

\begin{remark}
For the $K$-user interference channel with relays, we
can also design a two-hop transmission scheme. However, this requires
more relays in general. Reference
\cite{RankovRelay} considered a two-hop interference network with
single antenna relays, and
showed that to achieve interference-free transmission, which implies
achieving DoF $K\over 2$, we need $K(K-1)+1$
relays. This is more than $K(K-2)$ relays that are needed for our
scheme. This is because in our scheme, there are more
dimension in the signal space that we can utilize due to the fully
connected nature of the channel and the interaction between
transmitters and the relays in the transmission in slot 2.
$\hfill{} \blacksquare$\end{remark}

\begin{figure*}[!t]
\hrulefill
\begin{align}
&Y_m(t^\prime)=\sum_{k=1}^Kh_{mk}(t^\prime)d_{kk}+{\bf h}_{mR}(t^\prime)^T{\bf X}_R(t^\prime)\\
&=\sum_{k=1}^Kh_{mk}(t^\prime)d_{kk}+{\bf
  h}_{mR}(t^\prime)^T\sum_{t\neq m,i= m}\alpha_{tm}(t^\prime){\bf v}_{tm}(t^\prime)\left({\bf u}_m(t)^T{\bf
  h}_{Rt}(t)d_{tt}+\sum_{n\neq t}h_{mn}(t)d_{tn}\right)+{\bf
  h}_{mR}(t^\prime)^T\cdot \nonumber\\
&\quad \sum_{t=m,i\neq m}\alpha_{mi}(t^\prime){\bf
  v}_{mi}(t^\prime)\left({\bf u}_i(m)^T{\bf
  h}_{Rm}(m)d_{mm}+\sum_{n\neq m}h_{in}(m)d_{mn}\right)\label{eq:262}
\end{align}
\hrulefill
\begin{align}
{\bf Y}_m &=\begin{array}{cc}
    \\
    m\\
    \\
    t^\prime\\
  \end{array}\left[
  \begin{array}{cc}
    {\bf 0}\\
    h_{mm}(m)\\
    {\bf 0}\\
    \left[h_{mm}(t^\prime)+\sum_{i\neq m}\alpha_{mi}(t^\prime){\bf
        h}_{mR}(t^\prime)^T{\bf v}_{mi}(t^\prime){\bf u}_i(m)^T{\bf
        h}_{Rm}(m)\right]^{(t^\prime)}
  \end{array}
  \right]d_{mm}\nonumber\\
&+\sum_{k\neq m}\begin{array}{cc}
    \\
    m\\
    \\
    t^\prime\\
  \end{array}\left[
  \begin{array}{cc}
    {\bf 0}\\
    h_{mk}(m)\\
    {\bf 0}\\
    \left[\sum_{i\neq m}\alpha_{mi}(t^\prime){\bf h}_{mR}(t^\prime)^T{\bf v}_{mi}(t^\prime)h_{ik}(m)\right]^{(t^\prime)}
  \end{array}
  \right]d_{mk}\nonumber\\
&+\sum_{\gamma \neq m}\left(\begin{array}{cc}
    \\
    \gamma\\
    \\
    t^\prime\\
  \end{array}\left[
  \begin{array}{cc}
    {\bf 0}\\
    h_{m\gamma}(\gamma)\\
    {\bf 0}\\
    \left[h_{m\gamma}(t^\prime)+\alpha_{\gamma m}(t^\prime){\bf
      h}_{mR}(t^\prime)^T{\bf v}_{\gamma m}(t^\prime){\bf
      u}_m(\gamma)^T{\bf h}_{R\gamma}(\gamma)\right]^{(t^\prime)}
  \end{array}
  \right]d_{\gamma \gamma}\right.\nonumber\\
&+\left.\sum_{k\neq \gamma}\begin{array}{cc}
    \\
    \gamma\\
    \\
    t^\prime\\
  \end{array}\left[
  \begin{array}{cc}
    {\bf 0}\\
    h_{mk}(\gamma)\\
    {\bf 0}\\
    \left[\alpha_{\gamma m}(t^\prime){\bf h}_{mR}(t^\prime)^T{\bf v}_{\gamma m}(t^\prime)h_{mk}(\gamma)\right]^{(t^\prime)}
  \end{array}
  \right]d_{\gamma k}\right)\label{eq:272}
\end{align}
\hrulefill
\begin{equation}
\alpha_{\gamma
  m}(t^\prime)=\frac{h_{m\gamma}(t^\prime)}{h_{m\gamma}(\gamma){\bf
    h}_{mR}(t^\prime)^T{\bf v}_{\gamma m}(t^\prime)h_{mk}(\gamma)-{\bf
    h}_{mR}(t^\prime)^T{\bf v}_{\gamma m}(t^\prime){\bf
    u}_m(\gamma)^T{\bf h}_{R\gamma}(\gamma)} \label{eq:263}
\end{equation} 
\hrulefill
\end{figure*}

\section{Conclusion}\label{sec:conclusion}
In this paper, we have investigated relay-aided interference alignment
schemes for the X channel and the interference channel, when no
channel state information (CSI) at the transmitters (CSIT) is
available. In particular, we have considered models where intermediate
relay nodes have access to CSI, and can compensate
for the lack of CSI at the transmitters. We have first investigated the
$M\times N$ X channel without CSIT assisted by relays with global
CSI. We have designed a transmission scheme and established sufficient conditions
between the number of relays and the
number of antennas at the relays such that the same optimal DoF as
the case when CSIT is available can be achieved. For the $K$-user interference
channel without CSIT, we have shown that relays can provide
interference alignment to achieve the optimal DoF $\frac{K}{2}$ using
a 2-slot transmission scheme.  In general, we have shown that the optimal DoF
$\frac{K}{2}$ can be achieved using $\left\lceil\frac{K(K-2)}{L^2}\right\rceil$
relays with $L$ antennas. 

In this paper, the focus has been on recovering the optimal DoF using
relays with global CSI, as if
transmitters had global CSI when in reality they have none. An interesting direction is quantifying
the impact of partial or delayed CSI at the relays on the DoF in the
presence of delayed or zero CSI at the transmitters. This is left as
future work.

\appendices
\section{Proof of Theorem \ref{theorem:k-user-x-multiant-relay}}\label{sec:proof-theor-k-user-x-multiant-relay}
We denote the message from transmitter $i$ to receiver $j$ as $d_{ji}$. We wish to send $K^2$ messages in $2K-1$ channel uses. In the first
 $K$ slots, the transmitters send the messages to the relay and the receivers, and in the rest $K-1$ slots, the relay performs partial interference alignment and joint beamforming with the transmitters to align all the interference into a $K-1$ dimensional space.

For slot $t=1,\cdots,K$, transmitter $k$ sends
\begin{equation}
X_k(t)=d_{tk}
\end{equation}
The signal received at receiver $m\in \{1,2,\cdots,K\}$ for slot $t$ is
\begin{equation}
Y_m(t)=\sum_{k=1}^Kh_{mk}(t)d_{tk}
\end{equation}
\begin{equation}
{\bf Y}_R(t)=\sum_{k=1}^K{\bf h}_{Rk}(t)d_{tk}
\end{equation}
where ${\bf Y}_R(t)\in\mathbb{C}^{K-1}$.

Now we need to obtain the vectors ${\bf u}_i(t)\in\mathbb{C}^{K-1}$ to
partially align the interference at the receivers. We let
\begin{equation}
{\bf u}_i(t)^T{\bf h}_{Rk}(t)=h_{ik}(t)
\end{equation}
where $k\neq t, i\neq t$. Since we have exactly $K-1$ equations to
solve for $K-1$ variables, and the channel coefficients are drawn from
the continuous distribution, there exist non-zero vectors ${\bf
  u}_i(t)$ almost surely.

We then have
\begin{equation}
{\bf X}_R^i(t)={\bf u}_i(t)^T{\bf Y}_R(t)={\bf u}_i(t)^T{\bf h}_{Rt}(t)d_{tt}+\sum_{k\neq t}h_{ik}(t).
\end{equation}

For each $X_R^i(t)$, we choose a weighting coefficient $\alpha_{ti}(t^\prime)$, where $t^\prime=K+1,\cdots,2K-1$, and a beamforming vector
${\bf v}_{ti}(t^\prime)$. We choose the beamforming vectors such that
${\bf v}_{ti}(t^\prime)\in \mathcal{N}([\left\{{\bf
    h}_{lR}(t^\prime)\right\}]^T)$, where $[\left\{{\bf
    h}_{lR}(t^\prime)\right\}]$ denotes a matrix taking the vector ${\bf
h}_{lR}(t^\prime)$ as its columns for all $l\neq t, l\neq i$. The matrix $[\left\{{\bf h}_{lR}(t^\prime)\right\}]^T$ has dimension $(K-2)\times (K-1)$, and
thus its null space is non-empty, which guarantees the existence of ${\bf v}_{ti}(t^\prime)$.

For slots $t^\prime = K+1, K+2, \cdots, 2K-1$, the relay transmits
\begin{equation}
{\bf X}_R(t^\prime)=\sum_{t=1}^K\sum_{i\neq t}\alpha_{ti}(t^\prime){\bf v}_{ti}(t^\prime){\bf u}_{i}(t)^T{\bf Y}_R(t).
\end{equation}

In the meantime, the transmitters send
\begin{equation}
X_k(t^\prime)= d_{kk}.
\end{equation}

The received signal at receiver $m$ for slot $t^\prime$ is
$Y_m(t^\prime)$ as shown in equation \eqref{eq:262} at the beginning of
this page.

We then combine the received signals from $2K-1$ slots into one vector
${\bf Y}_m$ as shown in equation \eqref{eq:272} at the beginning of this page.

For receiver $m$, $d_{mk}$, $k=1,\cdots,K$, are the messages that it needs to
decode, which should span a $K$ dimensional space. There are a total
of $2K-1$ dimensions available for the received signals, and hence we
should align the rest interference signals into a $K-1$ dimensional
space. With the help of the relay, we have already aligned the
interfering data streams
$d_{\gamma k}, \forall k \neq \gamma$, into a one dimensional space
for each fixed $\gamma$. If we can steer the data stream $d_{\gamma
  \gamma}$ into the same dimension of the signal space, then we are
able to keep all the interference into a $K-1$ dimensional space. This
is feasible by choosing the parameters $\alpha_{\gamma m}(t^\prime)$
such that equation \eqref{eq:263} at the beginning of previous page is
satisfied for all $t^\prime= K+1,\cdots,2K-1$.

It is easy to verify that after aligning the interference, the
intended messages $d_{mk}$ occupy a $K$-dimensional space, which does
not intersect with the $(K-1)$-dimensional space of the
interfering signals, and thus they can be decoded using a
zero-forcing decoder to completely eliminate the interference. Therefore we are able to send $K^2$ messages with $2K-1$
slots, and the DoF $\frac{K^2}{2K-1}$ is achievable.

\bibliographystyle{unsrt}

\end{document}